\newtheorem{theorem}{Theorem}[section]
\newtheorem{corollary}[theorem]{Corollary}
\newtheorem{lemma}[theorem]{Lemma}
\newtheorem{proposition}[theorem]{Proposition}
\newtheorem{claim}[theorem]{Claim}
\newtheorem{observation}[theorem]{Observation}
\theoremstyle{definition}
\newtheorem{definition}[theorem]{Definition}
\newtheorem{example}[theorem]{Example}
\newtheorem{remark}[theorem]{Remark}
\newcommand{\indicator}[1]{\mathbbm{1}\left[{#1}\right]}
\newcommand{\reals}{\mathbb{R}}
\newcommand{\contract}{\vec{\alpha}}
\newcommand{\nnreals}{\mathbb{R}_{\geq 0}}
\newcommand{\Ex}[2][]{\mbox{\rm\bf E}_{#1}\left[#2\right]}
\renewcommand{\Pr}[2][]{\mbox{\rm\bf Pr}_{#1}\left[#2\right]}
\newcommand{\Seq}{S^\dagger}
\newcommand{\SSeq}[1]{\mathcal{D}^\dagger}
\newcommand{\E}{\mathbf{E}}
\title{Black-Box Lifting and Robustness Theorems for\\Multi-Agent Contracts
\thanks{This project has been partially funded by the European Research Council (ERC) under the European Union's Horizon Europe Program (FACT, grant agreement No. 101170373), by an Amazon Research Award, by the NSF-BSF (grant number 2020788), by the Israel Science Foundation Breakthrough Program (grant No.~2600/24), and by a grant from TAU Center for AI and Data Science (TAD). T. Ezra is supported by the Harvard University Center of Mathematical Sciences and Applications.}
}
\author{Paul D\"utting\thanks{Google Research, Z\"urich, Switzerland. Email: \texttt{duetting@google.com}} \and Tomer Ezra\thanks{Harvard University, Cambridge, USA. Email: \texttt{tomer@cmsa.fas.harvard.edu}} \and Michal Feldman\thanks{Tel Aviv University, Tel Aviv, Israel. Email: \texttt{mfeldman@tauex.tau.ac.il}} \and Thomas Kesselheim\thanks{University of Bonn, Bonn, Germany. Email: \texttt{thomas.kesselheim@uni-bonn.de}}}
\date{}
\begin{document}

\maketitle

\begin{abstract}
Multi-agent contract design has largely evaluated contracts through the lens of pure Nash equilibria (PNE). This focus, however, is \emph{not} without loss: In general, the principal can strictly gain by recommending a complex, possibly correlated, distribution over actions, while preserving incentive compatibility.
In this work, we extend the analysis of multi-agent contracts beyond pure Nash equilibria to encompass more general equilibrium notions, including mixed Nash equilibria as well as (coarse-)correlated equilibria (CCE). The latter, in particular, captures the limiting outcome of agents engaged in learning dynamics.
    
Our main result shows that for submodular and, more generally, XOS rewards, such complex recommendations yield at most a constant-factor gain: there exists a contract and a
PNE whose utility is within a constant factor of the best CCE achievable by any contract. This provides a black-box \emph{lifting}: results established against the best PNE automatically apply with respect to the best CCE, with only a constant factor loss.
For submodular rewards, we further show how to transform a contract and a PNE of that contract 
into a new contract such that any of its CCEs gives a constant approximation to the PNE. This yields black-box \emph{robustness}: up to constant factors, guarantees established for a specific contract and PNE automatically extend to the modified contract and any of its CCEs.
We thus expand prior guarantees for multi-agent contracts and lower the barrier to new ones. 
As an important corollary, we obtain poly-time algorithms for submodular rewards that achieve constant approximations in any CCE, against the best CCE under the best contract. 
Such worst-case guarantees are provably unattainable for XOS rewards. 
Finally, we bound the gap between different equilibrium notions for subadditive, supermodular, and general rewards.
\end{abstract}

\section{Introduction}

A classic contract setting features a \emph{principal} who must incentivize a self-interested \emph{agent} to take costly actions that generate value for the principal. Examples include a firm designing incentive schemes for employees or a platform rewarding contributors. The central difficulty is that actions are typically \emph{hidden}: the principal cannot directly monitor actions and must instead align incentives through payments that depend only on observable outcomes.

In recent years, many such relationships have migrated to computational platforms. These environments are large-scale and complex, calling for an algorithmic treatment. \emph{Algorithmic contract design} has therefore emerged at the intersection of economics and computation, developing models and schemes for designing incentives in these settings (we refer to \cite{DuttingFT24survey} for a comprehensive survey).

A natural focal point in this literature is \emph{combinatorial contracts}: the principal may interact with teams of agents, a single agent may choose among combinations of actions, and the overall reward depends on the joint selection through a combinatorial reward function that can exhibit both substitutability and complementarity \citep[e.g.,][]{BFN06,DEFK23,CastiglioniMN23,DEFK21,VDPP24,EFS24,DEFK25,HannCS24,DuttingFT24,DuttingFTR26,Feldman25}. The principal's objective is to select a contract that maximizes her expected utility given the agents' equilibrium behavior.

A key limitation of much of this literature is its (near) exclusive focus on \emph{pure Nash equilibria} (PNE). In multi-agent settings this restriction is \emph{not} without loss of generality. As observed by \citet{BabaioffFN10}, the gap can arise already for submodular rewards (see Example~\ref{ex:mixed}). In particular, a principal can sometimes do strictly better by inducing agents to play a \emph{distribution} over actions from which no agent wishes to deviate. This raises a natural question: \emph{how much additional utility can the principal obtain by moving beyond PNE to richer equilibrium concepts that allow randomization or correlation?}
This question is especially salient because natural learning dynamics typically converge to these broader, and often correlated, equilibrium notions.

\subsection{Model and Research Problem}

We consider the multi-agent combinatorial contracts model of \cite{DEFK25}, which generalizes the models of \cite{DEFK21} and \cite{DEFK23}. 
In the multi-agent combinatorial contracts model,
a single principal interacts with $n$ agents.  
Each agent $i \in [n]$ can take any subset of actions $S_i$ from an available action set $A_i$.
We let $A = \bigcupdot_i A_i$, and use $m = |A|$ to denote the total number of actions. We refer to the special case where $|A_i| = 1$ for all $i \in [n]$, i.e., each agent can either take action or not, as the binary-actions case. 
The agents' choice of actions $S \subseteq A$, with $S_i = S \cap A_i$ the set of actions chosen by agent $i \in [n]$, determines a reward $f(S)$ through a reward function $f: 2^A \rightarrow \reals_{\geq 0}$. We generally assume that the reward function is normalized (so that $f(\emptyset) = 0$) and monotone (so that $S \subseteq T$ implies $f(S) \leq f(T)$). 
Access to $f$ may be given through a \emph{value oracle} or a \emph{demand oracle} (see Section~\ref{sec:model}).  
The agents each have a cost $c_j$ for each action $j \in A$, and the cost for a set of actions $S_i \subseteq A_i$ is $c(S_i) = \sum_{j \in S_i} c_j$.
To incentivize the agents, the principal designs a linear contract $\contract = (\alpha_1, \ldots, \alpha_n) \in [0,1]^n$.  The interpretation is that the principal 
pays each agent $i$ an $\alpha_i$ fraction of the reward $f(S)$ that results from the agents' actions $S$. The principal's utility is $(1-\sum_{i} \alpha_i)f(S)$. The agents, in turn, have a utility of $\alpha_i f(S) - c(S_i)$. Since the agents' utilities depend on each other, we are interested in equilibria among agents. 

While prior work has focused on pure Nash equilibria (PNE), here we are interested in exploring the more general equilibrium concepts of mixed Nash equilibria (MNE), correlated equilibria (CE), and coarse-correlated equilibria (CCE). 
In these more general equilibrium notions the principal induces the agents to play a distribution over actions, while ensuring that no agent has an incentive to deviate.
The randomization can be either independent (as in MNE), or correlated (as in CE and CCE), where CE and CCE differ in how they capture the property that no agent wants to deviate (for formal definitions see Section~\ref{sec:model}).
It is well known that these equilibrium concepts are successive generalizations, i.e., 
\[
\text{PNE $\subseteq$ MNE $\subseteq$ CE $\subseteq$ CCE}. 
\]

Our goal in this work is to bound the gap in the principal's utility between different equilibria under different equilibrium notions, potentially achieved by different contracts, and provide algorithms for computing contracts with strong robustness guarantees. We consider both rewards from the hierarchy of complement-free set functions (submodular $\subseteq$ XOS $\subseteq$ subadditive) as well as supermodular and general monotone rewards.

\subsection{Our Contribution}

We discuss our main results, for submodular and XOS reward functions, 
in Section~\ref{sec:results-main}.
In Section~\ref{sec:results-additional}, we cover additional results that map the broader landscape of gaps between different equilibrium concepts under prominent classes of reward functions. We provide an illustration and an overview of our results in Figure~\ref{fig:gaps} and Table~\ref{tab:equilibrium_results}.

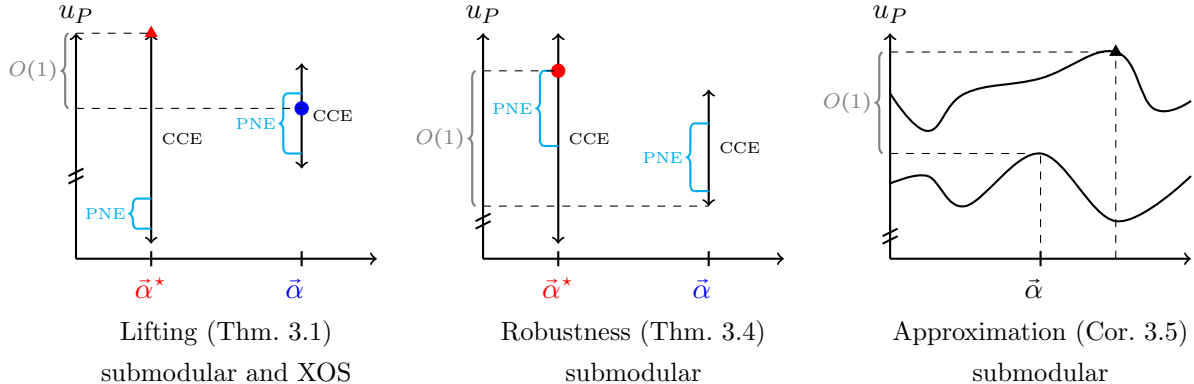
\begin{figure}
\begin{center}
\begin{minipage}{0.32\textwidth}
\begin{tikzpicture}
\draw[->,thick] (0,0) -- (0,3) node[above] {$u_P$};
\draw[->,thick] (0,0) -- (4,0);
\draw[-,thick] (1,0.1) -- (1,-0.1) node[below] {\textcolor{red}{$\contract^\star$}};
\draw[<->,thick] (1,0.2) --node[right]{\tiny CCE} (1,3);
\node[mark size=2.5pt,color=red] at (1,3) {\pgfuseplotmark{triangle*}};
\draw [decorate,
    decoration = {brace}, thick, cyan] (0.8,0.4) --node[left]{\tiny \textcolor{cyan}{PNE}} (0.8,0.8);
\draw[-,thick,cyan] (1,0.4) -- (0.8,0.4);
\draw[-,thick,cyan] (0.8,0.8) -- (1,0.8);
\draw[<->,thick] (3,1.2) --node[right]{\tiny CCE} (3,2.6);
\draw [decorate,
    decoration = {brace}, thick, cyan] (2.8,1.4) --node[left]{\tiny \textcolor{cyan}{PNE}} (2.8,2.2);
\draw[-,thick,cyan] (3,1.4) -- (2.8,1.4);
\draw[-,thick,cyan] (2.8,2.2) -- (3,2.2);
\filldraw[blue] (3,2.0) circle (2.5pt);
\draw[-,thick] (3,0.1) -- (3,-0.1) node[below] {$\textcolor{blue}{\contract}^{\textcolor{white}{\star}}$};
\draw[-,dashed] (0,3)--(1,3);
\draw[-,dashed] (0,2)--(3,2);
\draw [decorate,
    decoration = {brace}, thick, gray] (-0.1,2.0) --node[left]{\scriptsize $O(1)$} (-0.1,3.0);
\node at (2,-1) {\small Lifting (Thm.~\ref{thm:constant-gap})};
\node at (2,-1.5) {\small submodular and XOS};
\draw[-,thick] (-0.1,1) -- (0.1,1.1);
\draw[-,thick] (-0.1,1.1) -- (0.1,1.2);
\end{tikzpicture}
\end{minipage}
\begin{minipage}{0.32\textwidth}
\begin{tikzpicture}
\draw[->,thick] (0,0) -- (0,3)  node[above] {$u_P$};;
\draw[->,thick] (0,0) -- (4,0);
\draw[-,thick] (1,0.1) -- (1,-0.1) node[below] {\textcolor{red}{$\contract^\star$}};
\draw[<->,thick] (1,0.2) --node[right]{\tiny CCE} (1,3);
\draw [decorate,
    decoration = {brace}, thick, cyan] (0.8,1.5) --node[left]{\tiny \textcolor{cyan}{PNE}} (0.8,2.5);
\draw[-,thick,cyan] (1,1.5) -- (0.8,1.5);
\draw[-,thick,cyan] (0.8,2.5) -- (1,2.5);
\filldraw[red] (1,2.5) circle (2.5pt);
\draw[-,thick] (3,0.1) -- (3,-0.1) node[below] {$\textcolor{blue}{\contract}^{\textcolor{white}{\star}}$};
\draw[<->,thick] (3,0.7) --node[right]{\tiny CCE} (3,2.25);
\draw [decorate,
    decoration = {brace}, thick, cyan] (2.8,0.9) --node[left]{\tiny \textcolor{cyan}{PNE}} (2.8,1.8);
\draw[-,thick,cyan] (3,0.9) -- (2.8,0.9);
\draw[-,thick,cyan] (2.8,1.8) -- (3,1.8);
\draw[-,dashed] (0,2.5) -- (1,2.5);
\draw[-,dashed] (0,0.7) -- (3,0.7);
\draw [decorate,
    decoration = {brace}, thick, gray] (-0.1,0.7) --node[left]{\scriptsize $O(1)$} (-0.1,2.5);
\node at (2,-1) {\small Robustness (Thm.~\ref{thm:constant-gap-sub-CCE})};
\node at (2,-1.5) {\small submodular};
\draw[-,thick] (-0.1,0.5) -- (0.1,0.6);
\draw[-,thick] (-0.1,0.4) -- (0.1,0.5);
\end{tikzpicture}
\end{minipage}
\begin{minipage}{0.32\textwidth}
\begin{tikzpicture}
\draw[->,thick] (0,0) -- (0,3)  node[above] {$u_P$};;
\draw[->,thick] (0,0) -- (4,0);
\draw[-,thick] (2,0.1) -- (2,-0.1) node[below] {$\contract^{\textcolor{white}{\star}}$};
\draw [-,thick] plot [smooth, tension=0.7] coordinates { (0,2.2) (0.5,1.7) (1,2.2) (2,2.4) (3,2.75) (3.5,2) (4,2.1)};
\node[mark size=2.5pt] at (3,2.75) {\pgfuseplotmark{triangle*}};
\draw[-,dashed] (0,2.75) -- (3,2.75);
\draw[-,dashed] (3,0) -- (3,2.75);
\draw [-,thick] plot [smooth, tension=0.6] coordinates { (0,1) (0.5,1.1) (1,0.7) (2,1.4) (3,0.5) (4,1.1)};
\draw[-,dashed] (0,1.4) -- (2,1.4);
\draw[-,dashed] (2,0) -- (2,1.4);
\draw [decorate,
    decoration = {brace}, thick, gray] (-0.1,1.4) --node[left]{\scriptsize $O(1)$} (-0.1,2.75);
\node at (2,-1) {\small Approximation (Cor.~\ref{cor:robustness})};
\node at (2,-1.5) {\small submodular};
\draw[-,thick] (-0.1,0.2) -- (0.1,0.3);
\draw[-,thick] (-0.1,0.3) -- (0.1,0.4);
\end{tikzpicture}
\end{minipage}
\end{center}
\caption{Visualization of our results for submodular and XOS rewards. Lifting (left panel): We are given as input a contract $\contract^\star$ and a CCE (triangle) and we construct a contract $\contract$ and a PNE (circle). Robustness (middle panel): We are given a contract $\contract^\star$ and a PNE (circle) and we construct a contract $\contract$ with a guarantee for any CCE under that contract. 
In both panels, the ranges of principal utilities under PNE and CCE are drawn as continuous intervals for illustration.
Approximation (right panel): The lower curve shows the worst-performing CCE for each contract; the upper curve shows the corresponding best-performing CCE. We show that it is possible to compute a contract $\contract$ such that the worst CCE under that contract is within a constant-factor of the best CCE under any contract (black triangle).} 
\label{fig:gaps}
\end{figure}

\subsubsection{Main Results: Submodular and XOS Rewards}
\label{sec:results-main}

For submodular rewards, \cite{BabaioffFN10} demonstrated 
that the principal's utility under the best mixed Nash equilibrium can be strictly greater than under the best pure Nash equilibrium.
They conjectured that, for submodular rewards, this (multiplicative) gap is bounded by a constant.
Despite substantial progress on combinatorial contract design since then, this conjecture has remained unresolved.

Our main result not only settles this conjecture but in fact establishes a much stronger result.
We show that, in a model with arbitrary (non-binary) actions and  for XOS rewards (a strict generalization of submodular rewards), for each contract $\contract^\star$ and any coarse-correlated equilibrium (CCE) of that contract, 
there is a (typically different) contract $\contract$ and a pure Nash equilibrium (PNE) under $\contract$ that achieves a constant fraction of the principal's utility under the CCE. See Figure~\ref{fig:gaps} (left) for an illustration.
Thus, the conjectured constant gap is strengthened along three dimensions: extending from binary to arbitrary actions, from MNE to CCE, and from submodular to XOS rewards.

\medskip
\noindent\textbf{Black-Box Lifting Theorem} 
(Theorem~\ref{thm:constant-gap})\textbf{.} 
For the multi-agent combinatorial contracts 
model with submodular or XOS rewards, given any contract $\contract^\star$ and any CCE of $\contract^\star$, there exists a contract $\contract$ and a PNE $S$ of $\contract$ such that the principal's utility under $S$ is a constant fraction of the principal's utility under the CCE of $\contract^\star$. Moreover, given $\contract^\star$ and the CCE, one can find $\contract$ and $S$ in polynomial time (in $n$, $m$ and the support size of the CCE) with value and demand oracle access to $f$. 
\medskip

This result provides a black-box lifting: 
any guarantee established with respect to the best-PNE benchmark automatically holds also against the stronger benchmark of the best CCE, 
with only a constant factor loss, 
thereby strengthening existing bounds for multi-agent contracts and simplifying the path to new ones.
Indeed, to obtain guarantees against the best CCE, one no longer needs to grapple with the intricate notion of CCE (with its randomization and correlation), but can instead reason about the simpler and more transparent concept of pure Nash equilibrium. Readers familiar with the smoothness framework for bounding the price of anarchy \citep{Roughgarden15} may find this connection conceptually reminiscent; see Section \ref{sec:related-work} for further discussion.

As an immediate corollary, all approximation results established in \cite{DEFK23,DEFK25}, for instance, hold against the stronger benchmark of the best CCE under any contract. 
One such result is that, for submodular rewards, there is a poly-time algorithm that finds a contract under which any PNE attains a constant-factor approximation to the best PNE under any contract~\citep{DEFK25}. Our lifting theorem immediately extends this guarantee against the best CCE under any contract.


We further strengthen this result by establishing the following (black-box) robustness result 
for submodular rewards, which shows how to transform any contract and pure Nash equilibrium of that contract into a different contract that comes with a guarantee for any CCE under the modified contract. See Figure~\ref{fig:gaps} (middle) for an illustration.

\medskip
\noindent\textbf{Black-Box Robustness Theorem} (Theorem~\ref{thm:constant-gap-sub-CCE})\textbf{.} For the multi-agent combinatorial contracts model with submodular rewards, given any contract $\contract^\star$ and pure Nash equilibrium of $\contract^\star$, there is an algorithm that runs in polynomial time (in $n$ and $m$) using value queries to $f$, that computes a contract $\contract$ such that any CCE  under $\contract$ achieves an $O(1)$-approximation to the principal's utility under the PNE of $\contract^\star$.
\medskip

Combining this with the results in \citep{DEFK23,DEFK25} and the Black-Box Lifting Theorem, shows that for the multi-agent combinatorial contracts 
model with submodular rewards there is a poly-time algorithm (with value and demand oracles) that computes a contract 
such that any CCE under that contract provides a $O(1)$-approximation to the best CCE under any contract (see Corollary~\ref{cor:robustness}). In the special case of binary actions, such a contract can be computed with value queries only. We provide an illustration of the guarantees achieved by the corresponding algorithms in Figure~\ref{fig:gaps} (right).

Similar to the Black-Box Lifting Theorem, the Black-Box Robustness Theorem is conceptually related to the price of anarchy framework. It shows that, for submodular rewards, one only needs to show  the existence of a good contract and pure Nash equilibrium, and can then rely on the extension result to transform the original contract into a contract under which all CCE (and hence learning outcomes) are near-optimal.
Notably, for XOS rewards, such a worst-case approximation guarantee is unattainable, even for the weaker target of ensuring that 
any PNE under the given contract is within a constant factor of the best PNE under any contract (as already observed in \cite{DEFK25}).

\subsubsection{Beyond XOS Rewards: Mapping the Landscape}
\label{sec:results-additional}

\begin{table}[t]
    \centering
    \renewcommand*{\arraystretch}{1.1}
    \begin{tabular}{|c|c|p{1.75cm}|p{1.75cm}|p{1.75cm}|}
        \hline
        \multirow{2}{*}{Reward function} & & \multicolumn{3}{c|}{Gap between ... and PNE} \\
        \cline{3-5}
        & & ~~~~MNE & ~~~~CE & ~~~~CCE \\
        \hline
        \hline
        \multirow{2}{*}{Submodular/XOS} & \multirow{1}{*}{Binary/arbitrary} &  \multicolumn{3}{c|}{$\Theta(1)$} \\
        & \multirow{1}{*}{actions} & \multicolumn{3}{c|}{Lower: Ex.~\ref{ex:mixed}, Upper: Thm.~\ref{thm:constant-gap}} \\
        \hline
        \hline
        \multirow{2}{*}{Subadditive} & \multirow{1}{*}{Binary/arbitrary} &  \multicolumn{3}{c|}{$\Theta(\text{poly}(n))$} \\
        & \multirow{1}{*}{actions} & \multicolumn{3}{c|}{Lower: Prop.~\ref{prop:subadditive-binary}, Upper: Prop.~\ref{prop:subadditive-upper-bound}} \\
        \hline
        \hline
        \multirow{4}{*}{Supermodular} & \multirow{1}{*}{Binary} &   \multicolumn{3}{c|}{No gap} \\
        & \multirow{1}{*}{actions} & \multicolumn{3}{c|}{Thm.~\ref{thm:no-gap-super-binary}} \\
        \cline{2-5}
        & \multirow{1}{*}{Arbitrary} & \multicolumn{2}{c|}{No gap} & Unbounded \\
        & \multirow{1}{*}{actions} & \multicolumn{2}{c|}{Thm.~\ref{thm:no-gap-super-multi}} & Prop.~\ref{prop:supermodular-gap-for-cce}\\
        \hline
        \hline
        \multirow{2}{*}{General} & \multirow{1}{*}{Binary/arbitrary} &   \multicolumn{3}{c|}{Unbounded} \\
        & \multirow{1}{*}{actions} &  \multicolumn{3}{c|}{Prop.~\ref{prop:general-binary}} \\
        \hline
    \end{tabular}
    \caption{Gaps between MNE/CE/CCE and PNE, for different reward functions and binary actions vs.~arbitrary actions.
    In merged cells, lower bounds are proved for simpler action and equilibrium notions and extend to more general ones; upper bounds are proved for more general cases and apply to the simpler ones. (E.g., for submodular/XOS rewards, the lower bound is proved for binary actions and MNE, while the upper bound is proved for arbitrary actions and CCE.) 
    }
    \label{tab:equilibrium_results}
\end{table}

We next examine the gap between different equilibrium concepts beyond XOS rewards. We start with subadditive rewards, then we move to supermodular rewards, and finally we consider general (monotone) rewards.

\paragraph{Subadditive Rewards.}
For subadditive rewards, we construct a carefully designed instance that exhibits a polynomial lower bound of order $\Omega(\sqrt{n})$ on the gap between the principal's utility under the best mixed Nash equilibrium (MNE) and the best pure Nash equilibrium (PNE), where $n$ denotes the number of agents.
Remarkably, this gap already arises in the binary-actions case (so $n$ is also the number of actions here). We also show that this gap is at most $O(n)$, even between CCE and PNE and the case of general actions.\footnote{A weaker upper bound of $m$ follows from a recent result by \cite{DEFK25}, showing that the gap between social welfare and the best PNE is at most $m$.}

\medskip
\noindent\textbf{Proposition (Subadditive Rewards)} (Proposition~\ref{prop:subadditive-binary} and Proposition~\ref{prop:subadditive-upper-bound})\textbf{.} 
For the multi-agent combinatorial contracts 
model with subadditive rewards,  the gap between the principal's utility from the best MNE and the best PNE is $\Omega(\text{poly}{(n)})$, even in the binary-actions case. On the other hand,
the gap between the principal's utility under the best CCE and the best PNE is at most $O(n)$, 
even for an arbitrary number of actions.

\paragraph{Supermodular Rewards.} 
For supermodular rewards, 
we show that in the case of \emph{binary actions}, there is no gap between the principal's utility from the best coarse-correlated equilibrium and the best pure Nash equilibrium. 
This strengthens a result of \cite{BabaioffFN10}, who established this no-gap result only with respect to mixed NE. 

\medskip
\noindent \textbf{Theorem (Supermodular Rewards, Binary Actions)} (Theorem~\ref{thm:no-gap-super-binary})\textbf{.} For the multi-agent model with binary actions and supermodular rewards, there is no gap between the principal's utility from the best CCE and the best PNE.
\medskip

In contrast, for arbitrary actions there is no gap only when comparing the best \emph{correlated} equilibrium (CE) to the best PNE; for the more general notion of \emph{coarse-correlated} equilibria (CCE), the gap to PNE may be unbounded.

\medskip
\noindent \textbf{Theorem (Supermodular Rewards, General Actions)} (Theorem~\ref{thm:no-gap-super-multi} and Proposition~\ref{prop:supermodular-gap-for-cce})\textbf{.} For the multi-agent model with arbitrary actions and supermodular rewards, there is no gap between the principal's utility under the best CE and the best PNE; while the gap between CCE and PNE is unbounded.

\paragraph{General Rewards.} Finally, we turn to general monotone rewards, in particular rewards that are neither subadditive nor supermodular. 
We show that in this regime, even the gap between the best MNE and the best PNE can be unbounded.
Moreover, such unbounded gaps arise already with a constant number of agents and binary actions.

\medskip
\noindent\textbf{Proposition (General Rewards)} (Proposition~\ref{prop:general-binary})\textbf{.}
In the multi-agent combinatorial contract model with general rewards (neither subadditive, nor supermodular), the gap between the principal's utility under the best MNE and the best PNE is unbounded,
even with only four agents and binary actions.

\subsection{Challenges and Techniques}

\paragraph{Black-Box Lifting and the Scaling-for-Existence Lemma.} 
Our key tool for establishing an upper bound on the gap between coarse-correlated equilibria and pure Nash equilibria, for submodular and XOS rewards, is a 
simple yet powerful  
\emph{Scaling-for-Existence Lemma} (Lemma~\ref{lem:double}). This lemma starts from a contract $\contract^\star$ and a distribution over action sets that satisfies a mild dropout-stability 
condition
(Definition~\ref{def:dropout-stability}). Dropout-stability requires that each agent weakly prefers to follow the actions in the given distribution over unilaterally deviating to taking no action. This condition is naturally satisfied by all equilibrium concepts we study. 

The Scaling-for-Existence Lemma 
shows that, 
after appropriately scaling the contract to $\contract$, there
exists a PNE
that guarantees high reward relative to the original distribution.
In particular, if there is a subset of agents that achieves high  
expected reward under the original distribution 
and whose total share $\sum_i \alpha^\star_i$ 
is bounded away from $1$, 
then the lemma can be used to show that the induced pure Nash equilibrium recovers a constant fraction of the principal's utility under the original distribution.
To obtain the constant-factor gap, we show that in any dropout-stable distribution, either (i) there is a ``significant'' agent and a good pure Nash equilibrium that incentivizes only this agent, or (ii) there exists a subset of agents to which we can apply the Scaling-for-Existence Lemma to obtain a good pure Nash equilibrium.

Our proof of the Scaling-for-Existence Lemma leverages that, for each contract, the 
induced game among the agents is a potential game (e.g., \cite{VDPP24,DEFK25}). 
We show that 
for the original contract $\contract^\star$ and any dropout-stable distribution under this contract, the XOS structure implies
that the expected potential of any set $S$ in the support of the dropout-stable distribution
is non-negative. 
We then scale the contract to
$\contract$ and select $S$ as a global maximizer of the potential function for $\contract$. This ensures that $(S,\contract)$ forms a pure Nash equilibrium. 
Finally, the fact that $\contract$ is obtained from $\contract^\star$ through scaling, together with the non-negativity of the expected potential 
under $\contract^\star$, implies that this pure Nash equilibrium achieves high reward relative to the original distribution.

Notably, for subadditive rewards, our proof for the Scaling-for-Existence Lemma breaks: it is no longer guaranteed that the expected potential of any set $S$ in the support of a dropout stable distribution for contract 
$\contract^\star$ is non-negative. 
This failure not only undermines the proof technique, but points to a deeper structural issue. Namely, such a result is provably impossible for subadditive rewards: the gap between PNE and MNE (and thus PNE and CCE) is superconstant.

\paragraph{Black-Box Robustness and the Scaling-for-Robustness Lemma.} The Scaling-for-Existence Lemma 
derives a good PNE from 
a dropout-stable distribution over actions (such as CE or CCE). Our Black-Box Robustness Theorem builds on a lemma, the \emph{Scaling-for-Robustness Lemma} (Lemma~\ref{lem:scaling-for-robustness}), that achieves an orthogonal goal, namely showing that \emph{every} CCE is good relative to a reference equilibrium (or rather relaxation thereof).

Our lemma generalizes the Doubling Lemma of \cite{DEFK25}. The Doubling Lemma showed that, for submodular rewards, for every contract $\contract^\star$ and PNE of $\contract^\star$, there exists a scaled contract $\contract$ such that any PNE of $\contract$ provides a constant-approximation to the original PNE.
We present a new argument that achieves a parallel result for any CCE (rather than PNE) under the scaled
contract, losing only another constant factor.
This enables a comparison of any CCE under a given contract to the best PNE under any contract.

There are two main differences between the two scaling lemmas: 
First, they give different types of guarantees: the Scaling-for-Existence Lemma ensures that there {\em exists} a good equilibrium, whereas the Scaling-for-Robustness Lemma ensures that {\em every} equilibrium is good. Second, their domains differ: the existence lemma holds for XOS rewards, while the robustness lemma holds only for submodular rewards. 
A lemma unifying both results provably can't hold. Indeed, for XOS rewards, there are instances for which the gap between the best and the worst equilibrium is large \citep{DEFK25}.

\paragraph{Transitions for Supermodular Rewards.}
Our results for supermodular rewards exhibit two perhaps surprising transitions, namely (i) between binary and general actions and (ii) between CE and CCE. 
Compared to the binary-actions case, arbitrary actions introduce two additional challenges.
First, agents may take actions not present in the original distribution. We address this by showing
that there exists a PNE whose action profile contains the union of the sets in the support of
the correlated equilibrium, rather than matching it exactly, as in the binary case.
Second, a distribution that is only a CCE (and not a CE) may place positive probability on
actions that reduce an agent’s utility under the contract; such actions are not best responses
and therefore cannot appear in any PNE of that contract. Consequently, our proof extends to
correlated equilibria but not to coarse-correlated equilibria. Indeed, our lower-bound construction
for CCE shows that the gap can be unbounded.

\subsection{Further Related Work}
\label{sec:related-work}
\paragraph{Inefficiency of Equilibria and Price of Anarchy.} 
Bounding the inefficiency of equilibria is one of the staples of algorithmic game theory \cite[e.g.,][]{KoutsoupiasP99,RoughgardenT02,SyrgkanisT13}, and among the main conceptual contributions of theoretical computer science to economics.

In this regard, our work bears some conceptual resemblance to the smoothness framework for bounding the price of anarchy \citep{Roughgarden15}, in that to bound a ratio under a general equilibrium concept (CCE), it suffices to bound it under a simpler one (PNE), thereby simplifying the analysis. Similarly, as in the smoothness framework, it suffices to consider ``simple" deviations: deviation to one's strategy in an optimal profile in the smoothness framework, and deviation to taking no action in our case (as captured by the dropout-stability notion).

Two important differences to the Price of Anarchy literature are that the Price of Anarchy literature is typically interested in the social welfare (or social cost) achieved by equilibria, and focuses on worst-case equilibria for a fixed game. In contrast, here we are interested in the principal's utility and we compare equilibria across different induced games.

\paragraph{Related Studies on Combinatorial Contracts.}
Our work builds on the growing body of research on combinatorial contracts~\citep{Feldman25}, and in particular contributes to the rapidly expanding literature on multi-agent contracts.
The work that is most related to our work is \cite{BabaioffFN10}, who---working in the combinatorial agency model of \cite{BFN06,BabaioffFFW2012}---quantify the ratio between the principal's utility under mixed and pure strategies, for reward functions exhibiting increasing and decreasing marginals (corresponding to supermodular and submodular functions, respectively). 
We significantly extend this work from binary actions to arbitrary actions, and from mixed Nash equilibria to more general equilibrium concepts such as correlated and coarse-correlated equilibria---and on the way we resolve the main open question from this prior work, concerning the gap between PNE and MNE for submodular rewards

The very recent work of \cite{DEFK25} is another important reference. Although their work focuses on pure Nash equilibria, just like most of the existing literature on multi-agent contracts, they do show a related result: namely, that for the multi-agent combinatorial actions model that we study in this work, and for submodular rewards, there exists a contract $\contract$ such that the worst-equilibrium under that contract, yields a constant-factor approximation to the best equilibrium under the best contract. We strengthen this result in two ways: First, we show that in fact any CCE of the contract returned by the algorithm obtains a constant fraction to the best PNE under any contract. Second, by our lifting result, we extend this guarantee to the best-CCE benchmark.

\cite{alon2025multi} extend the multi-agent (binary action) contract design framework to settings with many projects, where the principal needs to partition the agents among the projects, and within each project, the principal incentivizes the agents through a contract.

\cite{CacciamaniBCN24} explore a related multi-agent and (non-combinatorial) multi-action model, and highlight the value of randomized contracts. In their model, the principal also uses a randomized contract, which is different from our approach which assumes that the principal posts a single deterministic contract. They then propose and study an equilibrium notion, which can be interpreted as a correlated equilibrium when both the principal and the agents randomize. They show that this type of randomized correlated equilibria are more powerful than pure Nash equilibria, demonstrating for instance that the principal's utility can be unboundedly higher when the setting has a supermodular 
reward structure. 
An important difference between our work and their work is that we consider deterministic contracts.

Finally, slightly more removed from our question and work, \cite{DasarathaGS2024} study a multi-agent multi-action model in which the agents choose from a continuum of effort levels, and the agent's choices determine both their costs and the principal's reward, through cost and reward functions. 
The conceptual similarity lies in their treatment of ``fractional'' actions, akin to our model where an agent selects different action sets with varying probabilities. However, unlike our work, they do not consider mixed or correlated equilibria.

\section{Preliminaries}
\label{sec:model}

\newcommand{\CCE}[0]{\mathcal{D}}
\newcommand{\dist}[0]{\mathcal{D}}
\newcommand{\actions}[0]{A}

\paragraph{Multi-Agent Combinatorial Contracts.} Every agent $i \in [n]$ has a (finite) set of actions $\actions_i$. The set of actions $A_i$ and $A_{i'}$ of any two agents $i \neq i'$ are disjoint. We denote the set of all actions by $\actions = \bigcup_{i \in [n]} \actions_i = [m]$. An important special case is when for all $i \in [n]$, it holds that $|\actions_i| = 1$. In this case, each agent can either take action or not. We refer to this as the binary-actions case.

Every agent $i$ can take any subset of actions $S_i \subseteq \actions_i$. 
We use $S = \bigcup_{i \in [n]} S_i$ to denote the set of actions chosen by the agents, and for each agent $i \in [n]$ we let $S_{-i} = \bigcup_{i' \neq i} S_i.$ (Similarly, for any $S \subseteq \actions$ we let $S_i = S \cap \actions_i$.)
There is a reward function $f: 2^A \rightarrow \nnreals.$ We generally assume that the reward function is monotone (non-decreasing) and normalized so that $f(\emptyset) = 0$. In addition, each agent has a cost $c_j$ for each action $j \in \actions_i$. The cost of a set of actions $S_i \subseteq \actions_i$ is $c(S_i) = \sum_{j \in S_i} c_j.$ For singletons $\{j\} \subseteq \actions$, we sometimes use the shorthands $f(j) = f(\{j\})$ and $c(j) = c(\{j\})$. 

A (linear) contract $\contract = (\alpha_1, \ldots, \alpha_n)$ defines a share $\alpha_i \in [0,1]$ of the reward $f(S)$ that is to be paid to agent $i$.  The utility of agent $i$ under contract $\contract$ when the set of actions chosen by the agents is $S$ is
\[
u_i(S, \contract) = \alpha_i f(S) - c(S_i).
\]

The utility of the principal in that case is
\[
u_P(S, \contract) = \left(1-\sum_{i \in [n]} \alpha_i\right) \cdot f(S).
\]

Note that, for any set $S$, the utilities of the agents and the utility of the principal sum up to $f(S) - c(S)$. We refer to this quantity as the \emph{welfare} of the set of actions $S$.

We assume that both the principal and the agents are expected utility maximizers; 
their expected utilities for a distribution $\CCE$ over sets $S \subseteq \actions$ is $u_i(\CCE,\contract) = \E_{S \sim \CCE}[u_i(S,\contract)]$ and $u_P(\CCE,\contract) = \E_{S \sim \CCE}[u_P(S,\contract)]$. We adopt the perspective of the principal and seek a contract that maximizes the principal's expected utility.

\paragraph{Equilibrium Concepts.} In multi-agent combinatorial contracts, the principal defines a contract $\contract$ and the agents choose their actions in reply to this contract. 
We are thus looking at a single-leader multiple-followers Stackelberg game. Specifically, for any fixed contract $\contract$ we study a simultaneous move game among the agents. We consider different equilibrium concepts.

A pure Nash equilibrium is a set of actions $S$ such that no agent wants to deviate from their choice of action $S_i$ to some other action $T_i$.

\begin{definition}[Pure Nash Equilibrium]
Given reward function $f$, a \emph{pure Nash equilibrium} (PNE) of contract $\contract$ is a set of actions $S \subseteq \actions$ such that for every agent $i$ and every $T_i \subseteq \actions_i$ it holds that $$\alpha_i \cdot f(S_{-i} \cup S_i) -c(S_i) \geq \alpha_i \cdot f(S_{-i} \cup T_i) -c(T_i) .$$
\end{definition}

A mixed Nash equilibrium, in turn, is a product distribution $\dist = \dist_1 \times \ldots \times \dist_n$ over sets of actions where $\dist_i$ is a distribution over sets of action among $A_i$, where no agent $i$ strictly wants to deviate from their distribution $\dist_i$ to a pure strategy $T_i \subseteq A_i$.

\begin{definition}[Mixed Nash Equilibrium]
Given reward function $f$, a \emph{mixed Nash equilibrium} (MNE) of contract $\contract$ is a product distribution $\dist = \prod_{i \in [n]} \dist_i$ over sets $S \subseteq \actions$ such that for every agent $i$ and every $T_i \subseteq A_i$ it holds that $$\E_{S \sim \dist}[\alpha_i \cdot f(S_{-i} \cup S_i) -c(S_i)] \geq \E_{S \sim \dist}[\alpha_i \cdot f(S_{-i} \cup T_i) -c(T_i)]  .$$
\end{definition}

Correlated equilibria are defined as a joint (possibly) correlated distribution over sets of actions. We can interpret $S_i$ as the action recommended to agent $i$. Then the equilibrium requirement is that each agent $i$ should prefer to follow their recommended action $S_i$, rather then switching to some other action $\pi_i(S_i)$ whenever they are told $S_i$.

\begin{definition}[Correlated Equilibrium]
Given reward function $f$, a \emph{correlated equilibrium} (CE) of contract $\contract$ is a distribution $\dist$ over sets $S \subseteq \actions$ such that for every agent $i$ and every mapping $\pi_i: 2^{\actions_i} \rightarrow 2^{\actions_i}$ it holds that $$\E_{S \sim \dist}[\alpha_i \cdot f(S_{-i} \cup S_i) -c(S_i)] \geq \E_{S \sim \dist}[\alpha_i \cdot f(S_{-i} \cup \pi_i(S_i)) -c(\pi_i(S_i))]  .$$
\end{definition}

The notion of a coarse-correlated equilibrium weakens the equilibrium requirement of a correlated equilibrium. Rather than being able to map each recommended action $S_i$ to some other action $\pi_i(S_i)$, agent $i$ only considers a single alternative action $T_i$.

\begin{definition}[Coarse-Correlated Equilibrium]
Given reward function $f$, a \emph{coarse-correlated equilibrium} (CCE) of contract $\contract$ is a distribution $\dist$ over sets $S \subseteq \actions$ such that for every agent $i$ and every $T_i \subseteq A_i$ it holds that $$\E_{S \sim \dist}[\alpha_i \cdot f(S_{-i} \cup S_i) -c(S_i)] \geq \E_{S \sim \dist}[\alpha_i \cdot f(S_{-i} \cup T_i) -c(T_i)]  .$$
\end{definition}

Note that the inequality in the definition of CCE is identical to that in the definition of MNE. The difference is that 
in an MNE, the distribution 
$\dist$ must be a product distribution.

It is not difficult to see that the equilibrium notions are successive relaxations, i.e.,
\[
\text{PNE} \subseteq \text{MNE} \subseteq \text{CE} \subseteq \text{CCE}.
\]

While in general games, a pure Nash equilibrium may not exist, games induced by multi-agent combinatorial action contracts 
admit a  
weighted potential function \citep{VDPP24,DEFK25}. 
Recall that any (finite) game that admits such a potential function has the finite-improvement property, and hence at least one pure Nash equilibrium. Another useful property of such games is that any local maximum of the potential function corresponds to a pure Nash equilibrium of the underlying game. (For a detailed discussion of potential games see \cite{MS96}.)

Recall that a 
weighted potential function is a function $\phi: 2^\actions \rightarrow \{-\infty\} \cup \reals$ such that $u_i(S_i,S_{-i},\contract) > u_i(S'_i,S_{-i},\contract)$ implies $\phi(S_i,S_{-i}) > \phi(S'_i,S_{-i})$. For any set $S$ and contract $\contract$, let 
\begin{equation}
\Phi(S, \contract) = f(S) - \sum_{i \in [n]} \frac{c(S_i)}{\alpha_i},
\label{eq:potential}    
\end{equation}
where, if $\alpha_i=0$ we define $c(S_i)/\alpha_i=\infty$ when $c(S_i)>0$, and we let $c(S_i)/\alpha_i=0$ when $c(S_i)=0$.

\begin{proposition}[\cite{VDPP24,DEFK25}]
\label{prop:potential}
For every contract $\contract$, the function $\Phi(\cdot,\contract)$ 
is a weighted 
potential function in the game induced by the contract $\contract$.
\end{proposition}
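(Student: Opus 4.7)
The plan is a direct algebraic verification that a unilateral deviation changes agent $i$'s utility and the candidate potential $\Phi$ by quantities differing only by the scalar factor $\alpha_i$. I would fix a contract $\contract$, an agent $i \in [n]$, a profile $S_{-i}$ for the other agents, and two candidate strategies $S_i, S'_i \subseteq A_i$, and then compare the utility difference with the $\Phi$-difference under the deviation $S_i \mapsto S'_i$.

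The main step is the case $\alpha_i > 0$. Expanding definitions,
\[
u_i(S'_i \cup S_{-i}, \contract) - u_i(S_i \cup S_{-i}, \contract) = \alpha_i\bigl[f(S'_i \cup S_{-i}) - f(S_i \cup S_{-i})\bigr] - \bigl[c(S'_i) - c(S_i)\bigr],
\]
and since all terms $c(S_j)/\alpha_j$ for $j \neq i$ cancel in the $\Phi$-difference,
\[
\Phi(S'_i \cup S_{-i}, \contract) - \Phi(S_i \cup S_{-i}, \contract) = \bigl[f(S'_i \cup S_{-i}) - f(S_i \cup S_{-i})\bigr] - \frac{c(S'_i) - c(S_i)}{\alpha_i}.
\]
Multiplying the second identity by $\alpha_i > 0$ recovers the first, yielding
$u_i(S'_i \cup S_{-i}, \contract) - u_i(S_i \cup S_{-i}, \contract) = \alpha_i \bigl[\Phi(S'_i \cup S_{-i}, \contract) - \Phi(S_i \cup S_{-i}, \contract)\bigr]$.
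Since $\alpha_i > 0$, the two differences share a sign, so any strict utility improvement is matched by a strict $\Phi$-improvement; this is precisely the weighted potential property with weight $\alpha_i$.

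It remains to handle the boundary case $\alpha_i = 0$, which is purely notational and relies on the extended-reals convention in the statement. Here $u_i(S \cup S_{-i}, \contract) = -c(S_i)$, so any strict improvement forces $c(S_i) > c(S'_i)$. The convention sets $\Phi = -\infty$ whenever some agent $j$ with $\alpha_j = 0$ has $c(S_j) > 0$, and uses the standard finite expression otherwise. Thus any strict improvement ending at $c(S'_i) = 0$ lifts $\Phi$ from $-\infty$ to a finite value, consistent with the potential requirement. The residual situation in which $u_i$ strictly improves between two configurations both of $\Phi$-value $-\infty$ never matters for the standard applications, since the empty action set is a strictly cheaper deviation that also lifts $\Phi$ to a finite value; hence neither configuration can be a local $\Phi$-maximizer and in particular no PNE sits at such a profile. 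The main (and only) mild obstacle is the bookkeeping around this extended-reals convention; the real content of the proposition is the one-line algebraic cancellation above.
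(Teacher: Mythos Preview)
The paper does not supply a proof of this proposition; it is stated with attribution to \cite{VDPP24,DEFK25} and no argument is given. Your verification is the standard one and is correct: for $\alpha_i>0$ the identity
\[
u_i(S'_i\cup S_{-i},\contract)-u_i(S_i\cup S_{-i},\contract)=\alpha_i\bigl[\Phi(S'_i\cup S_{-i},\contract)-\Phi(S_i\cup S_{-i},\contract)\bigr]
\]
is exactly the weighted-potential property (and in particular implies the ordinal condition the paper states). Your treatment of the $\alpha_i=0$ boundary is honest and adequate for the paper's purposes: you correctly flag that two profiles of value $-\infty$ cannot be strictly ordered, and that this never obstructs the actual uses made of $\Phi$ here (global maximizers of $\Phi$ are finite since $\Phi(\emptyset,\contract)=0$, and best-response dynamics can always escape a $-\infty$ profile by dropping the offending actions). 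One minor imprecision: your sentence ``any strict improvement ending at $c(S'_i)=0$ lifts $\Phi$ from $-\infty$ to a finite value'' tacitly assumes no \emph{other} agent $j\neq i$ with $\alpha_j=0$ already forces $\Phi=-\infty$ through $S_{-i}$; but you immediately subsume this case under the ``residual situation'' you dismiss, so the overall argument is sound.
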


\paragraph{Dropout Stability.}

A crucial ingredient in our analysis is the following property, which captures stability with respect to a unilateral deviation to taking no action.

\begin{definition}[Dropout Stability]
\label{def:dropout-stability}
Given reward function $f$, a distribution $\dist$ over sets of actions $S$ is \emph{dropout-stable} with respect to $\contract$ if for every agent $i$ it holds that  $$\E_{S \sim \dist}[\alpha_i \cdot f(S) -c(S_i)] \geq \E_{S \sim \dist}[\alpha_i \cdot f(S_{-i})].$$
\end{definition}

The following observation is immediate.

\begin{observation}
\label{obs:cce-stable}
Any coarse-correlated equilibrium of $\contract$ is also dropout-stable with respect to $\contract$.
\end{observation}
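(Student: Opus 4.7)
The plan is to observe that dropout-stability is precisely the CCE inequality specialized to the deviation $T_i = \emptyset$, so the statement follows by a single instantiation of the CCE definition for each agent.

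Concretely, fix any agent $i \in [n]$ and apply the CCE condition of Definition~2.12 with the particular deviation $T_i = \emptyset$. Since costs are additive with $c(\emptyset) = 0$, the right-hand side becomes $\E_{S \sim \dist}[\alpha_i \cdot f(S_{-i} \cup \emptyset) - c(\emptyset)] = \E_{S \sim \dist}[\alpha_i \cdot f(S_{-i})]$. On the left-hand side, using $S = S_{-i} \cup S_i$ by definition of the decomposition of $S$ into the actions chosen by agent $i$ and those chosen by the others, we have $\E_{S \sim \dist}[\alpha_i \cdot f(S_{-i} \cup S_i) - c(S_i)] = \E_{S \sim \dist}[\alpha_i \cdot f(S) - c(S_i)]$. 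Chaining these two equalities with the CCE inequality yields exactly the dropout-stability condition of Definition~2.13, completing the argument.

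There is no real obstacle here: the observation is a semantic reading of the CCE definition, noting that ``deviating to no action'' is a valid unilateral deviation and is already captured by allowing $T_i$ to range over all subsets of $\actions_i$, including the empty set. The same derivation does \emph{not} require the distribution to be a product distribution, so the observation applies verbatim even though CCEs are allowed to be correlated. The only thing worth flagging for the reader is why we do not need the stronger CE inequality: dropout corresponds to the constant deviation map $\pi_i(S_i) \equiv \emptyset$, and such constant swap functions are exactly the deviations that the CCE definition already covers.
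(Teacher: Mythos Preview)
Your proof is correct and matches the paper's treatment: the paper simply states that the observation is immediate, and your instantiation of the CCE inequality at $T_i = \emptyset$ is exactly that immediate argument.
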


In what follows, it will be useful to have the notion of marginal contribution of an individual action or set of actions to a given set of actions $S$. For an action $j \in \actions$, we write $f(j \mid S) = f(S \cup j) - f(S)$ for the marginal contribution of action $j$ to $S$. Similarly, for a set of actions $T \subseteq\actions$, we write $f(T \mid S) = f(S \cup T) - f(S)$ for the marginal contribution of the set of actions $T$ to $S$.

\paragraph{Classes of Reward Functions.} 
Our main interest will be in the following classes of (non-negative) reward functions $f: 2^A \rightarrow \nnreals$. A set function is: 
\begin{itemize}[leftmargin=*, itemsep=0pt, parsep=2pt, topsep=8pt]
\item 
\emph{additive} if there exist values $f_1, \ldots, f_m \in \nnreals$ such that  $f(S) = \sum_{j \in S} f_j$.
\item  
\emph{gross-substitutes} if for any two vectors $p\leq  q \in \reals_+^m$ and any $S \subseteq \actions$ such that $S \in \arg\max_{S' \subseteq \actions} (f(S') - \sum_{j \in S'} p_j)$ there is a $T \subseteq \actions$ with $\{j \in S \mid q_j \leq p_j\} \subseteq T$ such that $T \in \arg\max_{T' \subseteq \actions} (f(T') - \sum_{j \in T'} q_j)$.
\item 
\emph{submodular} if for every $S, S' \subseteq \actions$ with $S \subseteq S'$ and any $j \in \actions$ it holds that $f(j \mid S) \geq f(j \mid S')$.
\item 
\emph{XOS} if there exists a collection of additive functions  $\{a_\ell: 2^\actions \rightarrow \nnreals\}_{\ell = 1,\ldots, k}$ such that for each $S \subseteq \actions$ it holds that $f(S) = \max_{\ell=1,\ldots,k}  a_{\ell }(S)$. 
\item 
\emph{subadditive} if for every $S,S' \subseteq \actions$, it holds that $f(S) + f(S') \geq f(S \cup S').$
\item 
\emph{supermodular} if for every $S, S' \subseteq \actions$ with $S \subseteq S'$ and any $j \in \actions$ it holds that $f(j \mid S) \leq f(j \mid S')$.
\end{itemize}

All functions in this list except for supermodular belong to the hierarchy of complement-free set functions. It is well known that additive $\subseteq$ gross substitutes $\subseteq$ submodular $\subseteq$ XOS $\subseteq$ subadditive, and that all containment relations are strict \citep{LLN06}.

We consider two standard primitives for accessing combinatorial set functions. A \emph{value oracle} is given a set $S$ and returns $f(S)$. A \emph{demand oracle} is given a set of non-negative prices $p_1, \ldots, p_m \in \mathbb{R}_{\geq 0}$ and returns a set $S$ that maximizes $f(S) - \sum_{j \in S} p_j$.

\section{Submodular and XOS Rewards}

In this section, we show that, when the rewards are submodular or more generally XOS,
there is a constant gap between the principal's utility under the best coarse-correlated equilibrium and the best pure Nash equilibrium. 
This result is tight in several ways. First, as we demonstrate in Example~\ref{ex:mixed}, there is at least a constant gap between mixed Nash equilibria and pure Nash equilibria, even with 
gross-substitutes rewards and binary actions. Moreover, as we show in Proposition~\ref{prop:subadditive-binary}, for subadditive rewards there is at least a polynomial gap, even between mixed and pure Nash equilibria and the special case of binary actions.

\begin{theorem}[Black-Box Lifting Theorem]\label{thm:constant-gap}
Suppose $f$ is XOS. Let $\contract^\star$ be any contract and let $\dist^\star$ be any coarse-correlated equilibrium of $\contract^\star$. Then there exists a contract $\contract$ and a pure Nash equilibrium $S$ of $\contract$ such that $(1-\sum_{i} \alpha_i) \cdot f(S) \geq \Omega(1) \cdot (1-\sum_{i} \alpha^\star_i) \cdot \mathbf{E}_{S^\star \sim \dist^\star}[f(S^\star)]$.
Moreover, given $\contract^\star$ and $\dist^\star$, one can find such $\contract$ and $S$ in polynomial time (in $n$, $m$ and in the support size of $\dist^\star$) using value and demand oracles to $f$.
\end{theorem}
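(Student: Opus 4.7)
The plan is to first invoke Observation~\ref{obs:cce-stable} so that $\dist^\star$ is dropout-stable with respect to $\contract^\star$: for every agent $i$, $\alpha_i^\star \, \mathbf{E}_{S^\star \sim \dist^\star}[f(S^\star) - f(S^\star_{-i})] \geq \mathbf{E}[c(S^\star_i)]$. The XOS property of $f$ gives $\sum_i f(S_i \mid S_{-i}) \leq f(S)$ for every $S$: pick an additive witness $a \leq f$ with $a(S) = f(S)$, so that $f(S_i \mid S_{-i}) = f(S) - f(S_{-i}) \leq a(S) - a(S_{-i}) = a(S_i)$, and then $\sum_i a(S_i) = a(S) = f(S)$. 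Summing the dropout-stability inequalities over $i$ and applying this bound in expectation yields
\[
\mathbf{E}_{S^\star \sim \dist^\star}\!\left[\sum_{i} \frac{c(S^\star_i)}{\alpha_i^\star}\right] \;\leq\; \mathbf{E}_{S^\star \sim \dist^\star}\!\left[f(S^\star)\right],
\]
equivalently $\mathbf{E}[\Phi(S^\star, \contract^\star)] \geq 0$. This non-negativity of the expected potential is the key output of the Scaling-for-Existence Lemma and drives the remainder of the argument.

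Next I would case-split on $\sigma = \sum_i \alpha_i^\star$ and on $\max_i \alpha_i^\star$. In the clean scaling regime, where both $\sigma$ and $\max_i \alpha_i^\star$ are bounded away from $1$ (say $\sigma \leq 1/4$ and $\max_i \alpha_i^\star \leq 1/2$), I take $\contract := 2\, \contract^\star$ and let $S$ be a global maximizer of $\Phi(\cdot, \contract)$; such $S$ is returned by a single demand-oracle call at prices $p_j = c_j / \alpha_i$ for $j \in A_i$. By Proposition~\ref{prop:potential}, this $S$ is a PNE of $\contract$. Comparing $\Phi(S, \contract) \geq \mathbf{E}_{S^\star \sim \dist^\star}[\Phi(S^\star, \contract)]$ and using $\contract = 2 \contract^\star$ together with the non-negativity above gives $\Phi(S,\contract) \geq \tfrac{1}{2}\,\mathbf{E}[f(S^\star)]$, hence $f(S) \geq \tfrac{1}{2}\,\mathbf{E}[f(S^\star)]$. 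Since $1 - \sum_i \alpha_i = 1 - 2\sigma = \Omega(1) \geq \Omega(1)(1 - \sigma)$, the principal's utility bound follows.

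In the complementary regime, where some agent has $\alpha_i^\star > 1/2$ or $\sigma$ is close to $1$, the same dropout-stability and XOS inequalities localize: either one can identify a ``significant'' agent $i^\star$ (with $\alpha_{i^\star}^\star \, \mathbf{E}[f(S^\star_{i^\star} \mid S^\star_{-i^\star})]$ comparable to $(1-\sigma)\mathbf{E}[f(S^\star)]$) and invoke a standard single-agent XOS contract for $i^\star$, zeroing out the other shares; or the ``light'' subset $L = \{i : \alpha_i^\star \leq \theta\}$ covers a constant fraction of the reward and has $\sum_{i \in L}\alpha_i^\star$ bounded away from $1$ by choice of $\theta$, in which case the scaling argument of the previous paragraph is re-run restricted to $L$, freezing agents outside $L$ to $\emptyset$. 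The main obstacle, I expect, is calibrating the threshold $\theta$, the scaling factor, and the subset choice so that at least one sub-regime always fires and recovers a constant fraction of $(1-\sigma)\mathbf{E}[f(S^\star)]$ without losing more than a constant in the interplay between dropout-stability (a lower bound on agent costs), the XOS telescoping (an upper bound on the sum of marginals), and the potential-based PNE construction. Once the thresholds are set, the polynomial-time claim is immediate: each expectation is computed from the finite support of $\dist^\star$, one demand-oracle query yields the potential maximizer, and the single-agent subroutine runs in polynomial time.
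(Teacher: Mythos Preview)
Your derivation of the expected-potential non-negativity and the scaling construction via a demand query is essentially the paper's Scaling-for-Existence Lemma, and your high-level case split (single significant agent vs.\ a subset of ``light'' agents) mirrors the paper's structure. The gap is in the subset branch.

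Your proposed light subset $L=\{i:\alpha_i^\star\le\theta\}$ cannot simultaneously guarantee (i) $\sum_{i\in L}\alpha_i^\star$ is bounded away from $1$ and (ii) $L$ captures a constant fraction of $\mathbf{E}[f(S^\star)]$. Take $n$ agents with $\alpha_i^\star=0.99/n$ each: for any threshold $\theta$, $L$ is either all agents (total share $0.99$, so scaling by any $\gamma>1$ pushes the principal's residual below a constant) or empty. Neither your ``clean regime'' ($\sigma\le 1/4$) nor your significant-agent branch fires here, and the light-subset branch fails. The paper handles exactly this case (its Case~C: all $\alpha_i^\star\le 3/4$) by \emph{partitioning} $[n]$ into two bundles $B_1,B_2$ with $\sum_{i\in B_\ell}\alpha_i^\star\le 3/4$ for each $\ell$ (always possible when $\sum_i\alpha_i^\star\le 1$ and each share is $\le 3/4$), and then applying the scaling lemma to the bundle with larger $\mathbf{E}[f(\bigcup_{i\in B_\ell}S_i^\star)]$. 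Subadditivity ensures the better bundle captures at least half of $\mathbf{E}[f(S^\star)]$, and scaling by $\gamma=7/6$ keeps the total share at most $7/8$. This partition step is the missing idea.

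A secondary point: your ``standard single-agent XOS contract'' is doing more work than you indicate. In the paper's Case~A (agent $i$ with $\alpha_i^\star>3/4$ and $(1-\alpha_i^\star)\mathbf{E}[f(S_i^\star)]\ge 4\,\mathbf{E}[f(S_{-i}^\star)]$), one sets $\alpha_i=(1+\alpha_i^\star)/2$ and must combine the PNE best-response inequality for $S_i$ under $\contract$ with the CCE inequality for agent $i$ under $\contract^\star$ (using subadditivity and monotonicity to control the $S_{-i}^\star$ terms) to conclude $f(S_i)\ge\tfrac12\,\mathbf{E}[f(S_i^\star)]$. This coupling of the two equilibrium inequalities is not a black-box single-agent result; it is where the side condition $(1-\alpha_i^\star)\mathbf{E}[f(S_i^\star)]\ge 4\,\mathbf{E}[f(S_{-i}^\star)]$ is used, and its failure is what triggers the paper's Case~B (drop agent $i$, apply the scaling lemma to the remaining agents, whose total share is $\le 1/4$).
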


To prove Theorem~\ref{thm:constant-gap} we proceed in two steps. In Section~\ref{sec:amplification-lemma}, we present our key new lemma driving this result, the Scaling-for-Existence Lemma. Afterwards, in Section~\ref{sec:proof-of-constant-gap}, we show how to use this lemma to establish the theorem. 
Finally, in Section~\ref{sec:robust-algo} we show how to leverage the theorem to obtain a poly-time algorithm for submodular rewards, that computes 
a contract such that any CCE under that contract provides a constant approximation to the best CCE under the best contract.
 
\begin{remark} 
We note that the proof of Theorem~\ref{thm:constant-gap} 
does not use the additivity of the cost functions, and thus Theorem~\ref{thm:constant-gap} holds even when each agent has an arbitrary normalized (not necessarily monotone) non-negative combinatorial cost function $c_i:2^{A_i} \rightarrow \reals_{\geq 0}$.
\end{remark}

\subsection{The Scaling-for-Existence Lemma}\label{sec:amplification-lemma}

Our key tool for establishing Theorem~\ref{thm:constant-gap} is the following Scaling-for-Existence Lemma. The starting point of this lemma is a coarse-correlated equilibrium $\dist$ for some contract $\contract$ --- or, more precisely, any distribution $\dist$ over sets that satisfies the weaker dropout-stability condition for contract $\contract$. The lemma then establishes the existence of a pure Nash equilibrium at an appropriately scaled contract $\contract'$, that achieves high reward relative to the original distribution $\dist$. Our proof relies on the property that the expected potential value of any dropout-stable distribution must be non-negative for XOS reward functions (even when restricted to a subset of agents). This implies that if a contract is scaled, then there is a PNE with high potential, which bounds from below the reward.

\begin{algorithm}[t]
\caption{Scaling-for-Existence for XOS Rewards\label{alg:lem}}
   \hspace*{\algorithmicindent} \textbf{Input:}  Costs $c_1,\ldots,c_m \in \reals_{\geq 0}$,  
   demand oracle access to a XOS function $f:2^A \rightarrow \reals_{\geq 0}$, a subset of agents $N'\subseteq N$, a parameter $\gamma>1$, a contract $\contract$ and a corresponding CCE $\dist$. \\
    \hspace*{\algorithmicindent} \textbf{Output:}  A contract $\contract'$ and a PNE $S'$ of $\contract'$ with  $f(S') \geq (1- \frac{1}{\gamma}) \cdot \E_{S\sim \dist}[f(\bigcup_{i \in N'} S_i)]$.
\begin{algorithmic}[1]
\State Let $\alpha_i' = \gamma\cdot \alpha_i\cdot \indicator{i\in N'}$
\State Let $\vec{p}$ be the price vector where $p(j)= \frac{c_j}{\alpha_i'}$ for each $j\in A_i$  
\Comment{Here, $\frac{0}{0}=0$  and $\frac{c}{0}=\infty$ for $c>0$} 
\State Let $S'\in\arg\max_{S\subseteq A} (f(S)-\sum_{j\in S}p(j))$
\State \Return $\contract',S'$
\end{algorithmic}
\end{algorithm}

\begin{lemma}
[Scaling-for-Existence Lemma]
\label{lem:double}
Suppose $f$ is XOS. Let $\dist$ be a dropout-stable distribution with respect to $\contract$. For any set of agents $N' \subseteq N$ and $\gamma > 1$, let $\contract'$ be defined by $\alpha'_i = \gamma \cdot \alpha_i$ for $i \in N'$ and $\alpha'_i = 0$ otherwise.
Then, there exists a pure Nash equilibrium $S'$ with respect to $\contract'$, satisfying  $$f(S') \geq (1- \frac{1}{\gamma}) \cdot \E_{S\sim \dist}[f(\bigcup_{i \in N'} S_i)].$$
Moreover, $\contract'$ and $S'$ can be found in polynomial time (in $n$, $m$ and in the support size of $\dist^\star$) with demand oracle access to $f$ (see Algorithm~\ref{alg:lem}).
\end{lemma}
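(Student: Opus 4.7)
The plan is to use the weighted potential function $\Phi(\cdot,\contract')$ from Proposition~\ref{prop:potential} as the backbone of the argument. Since $\alpha_i' = 0$ for $i \notin N'$, the potential
$$\Phi(T,\contract') = f(T) - \sum_{i \in N'} \frac{c(T_i)}{\gamma \alpha_i}$$
takes the value $-\infty$ on any $T$ that includes a costly action of an agent outside $N'$, so its global maximizer $S'$ satisfies $S'_i = \emptyset$ for $i \notin N'$ and, being a global (hence local) maximum, is a pure Nash equilibrium of $\contract'$. Moreover, $S'$ can be identified in polynomial time by a single demand query at prices $p(j) = c_j / \alpha_i'$ for $j \in A_i$, which is exactly what Algorithm~\ref{alg:lem} does. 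Since $\Phi(S',\contract') \leq f(S')$, it suffices to lower-bound $\Phi(S',\contract')$.

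The main step is to show
$$\Phi(S',\contract') \;\geq\; \E_{S \sim \dist}\!\left[\Phi\!\Big(\textstyle\bigcup_{i \in N'} S_i,\,\contract'\Big)\right] \;\geq\; \left(1 - \tfrac{1}{\gamma}\right) \E_{S \sim \dist}\!\left[f\!\Big(\textstyle\bigcup_{i \in N'} S_i\Big)\right].$$
The first inequality is immediate because $S'$ is a global maximizer of $\Phi(\cdot,\contract')$, so it beats the potential of every set in the support of $\dist$ (restricted to $N'$), and hence also their expectation. For the second, I expand the expected potential as
$$\E_{S \sim \dist}\!\left[f\!\Big(\textstyle\bigcup_{i \in N'} S_i\Big)\right] - \frac{1}{\gamma}\sum_{i \in N'} \frac{\E_{S \sim \dist}[c(S_i)]}{\alpha_i},$$
and control the cost term using the two assumptions.

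The key estimate, and the one that makes the XOS assumption essential, bounds $\sum_{i \in N'} \E[c(S_i)]/\alpha_i$ by $\E[f(\bigcup_{i \in N'} S_i)]$. Dropout-stability of $\dist$ with respect to $\contract$ yields, for each $i$, $\E[c(S_i)] \leq \alpha_i \cdot \E[f(S) - f(S_{-i})]$, so dividing by $\alpha_i$ and summing gives
$$\sum_{i \in N'} \frac{\E[c(S_i)]}{\alpha_i} \;\leq\; \E_{S \sim \dist}\!\left[\sum_{i \in N'}\!\big(f(S) - f(S_{-i})\big)\right].$$
For the inner sum, I invoke XOS: pick a maximizing additive clause $a_S$ with $a_S(S) = f(S)$ and $a_S(T) \leq f(T)$ for all $T$; then $f(S) - f(S_{-i}) \leq a_S(S) - a_S(S_{-i}) = a_S(S_i)$, and summing over $i \in N'$ yields $\sum_{i \in N'}(f(S) - f(S_{-i})) \leq a_S(\bigcup_{i \in N'} S_i) \leq f(\bigcup_{i \in N'} S_i)$. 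Substituting this back and dividing by $\gamma$ completes the chain of inequalities.

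I expect the XOS telescoping step to be the delicate point: it is precisely where the argument fails for subadditive rewards, which is consistent with the paper's remark that no analogous lemma can hold in that regime. Everything else—invoking Proposition~\ref{prop:potential} to identify the global-maximizer-of-$\Phi$ with a PNE, passing from the expectation over $\dist$ to a single realization, and interpreting the maximization of $\Phi$ as a demand query—is routine.
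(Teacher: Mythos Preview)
Your proposal is correct and essentially identical to the paper's proof: both take $S'$ as the global maximizer of $\Phi(\cdot,\contract')$ (a PNE found via one demand query), use XOS to show $\sum_{i\in N'}(f(S)-f(S_{-i}))\le f(\bigcup_{i\in N'}S_i)$, and combine this with dropout-stability to bound the cost term. The only cosmetic difference is that the paper packages the bound as $\E[\Phi(S_{N'},\contract)]\ge 0$ and then writes $\Phi(S_{N'},\contract')=(1-\tfrac{1}{\gamma})f(S_{N'})+\tfrac{1}{\gamma}\Phi(S_{N'},\contract)$, whereas you expand $\E[\Phi(S_{N'},\contract')]$ and bound the cost sum directly---these are the same inequality rearranged.
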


\begin{proof}
    Consider any $N'$. We use notation $S_{N'} = \bigcup_{i \in N'} S_i$.
    Fix a set $S$ in the support of $\dist$. Because $f$ is XOS, there is an additive function $a$ such that $f(S) = a(S)$ and $f(S'') \geq a(S'')$ for all $S''$. Therefore, we have 
    \begin{equation}
    \label{eq:ineq1}
    \sum_{i \in N'} f(S_i \mid S_{-i}) = \sum_{i \in N'} \left(f(S) - f(S_{-i}) \right)\leq \sum_{i \in N'} \left(a(S) - a(S_{-i})\right) = \sum_{i \in N'} a(S_i) = \sum_{i \in N'} a(S_{N'}) \leq f(S_{N'}).
    \end{equation}

   Now for any set $S$ define $\Phi(S, \contract) = f(S) - \sum_i \frac{c(S_i)}{\alpha_i}$ where $\frac{0}{0}$ is interpreted as $0$, and $\frac{c}{0}$ for a positive $c$ is interpreted as $\infty$. 
   We observe that for any agent $i\in N$ with $\alpha_i=0$, by dropout-stability it holds that $\E_{S\sim \dist }[c(S_i)] =0 $ and therefore $\Phi(S,\contract)$ is always finite for $S$ in the support of a distribution $\dist$ that is dropout-stable with respect to $\contract$.

    Then we have
    \[
    \Phi(S_{N'}, \contract)  = f(S_{N'}) - \sum_{i \in N'} \frac{c(S_i)}{\alpha_i} \geq \sum_{i \in N'} f(S_i \mid S_{-i}) - \sum_{i \in N'} \frac{c(S_i)}{\alpha_i} = \sum_{i \in N'} \left( f(S_i \mid S_{-i}) - \frac{c(S_i)}{\alpha_i} \right),
    \]
    where the inequality follows by Inequality~\eqref{eq:ineq1}.

    Observe that dropout-stability is equivalent to $\E_{S \sim \dist}[\alpha_i \cdot f(S_i \mid S_{-i}) -c(S_i)] \geq 0$ for all $i$.
    Then, by linearity of expectation and dropout-stability, it follows that 
    \begin{equation}
        \label{eq:phi-positive}
        \E_{S \sim \dist}[\Phi(S_{N'}, \contract)] \geq 0.
    \end{equation}

    Let $S'$ be a set of actions maximizing  $\Phi(S', \contract')$ (when fixing  $\contract'$ as defined in the statement of the lemma). Then we have  
    \begin{align*}
    \Phi(S', \contract') &\geq \E_{S\sim \dist}[\Phi(S_{N'}, \gamma \contract)] = \E_{S\sim \dist }\left[f(S_{N'}) - \sum_{i \in N'} \frac{c(S_i)}{\gamma \contract_i}\right]\\
    &= \left(1-\frac{1}{\gamma}\right) \E_{S \sim \dist}[f(S_{N'})] + \frac{1}{\gamma} \E_{S\sim \dist}[\Phi(S_{N'},\contract)] \geq \left(1-\frac{1}{\gamma}\right) \E_{S\sim \dist}[f(S_{N'})],
    \end{align*}
    where the first inequality follows by the maximality of $S'$, and the last inequality follows by Eq.~\eqref{eq:phi-positive}.

    As $S'$ is a global maximum of $\Phi(\cdot, \contract')$, it is also a local maximum. 
    Since $\Phi(\cdot,\contract')$ is a potential function for the game induced by the contract $\contract'$ (see Proposition~\ref{prop:potential}), this means that $S'$ is a pure Nash equilibrium with respect to contract $\contract'$. 
    Finally, observe that one can find $S'$ using a single demand query with prices $\frac{c_j}{\alpha'_i}$ for each action $j\in A_i$ (where $\frac{0}{0}$ is interpreted as $0$, and $\frac{c}{0}$ for $c>0$ is interpreted as $\infty$).
    The ``moreover'' part of the lemma thus follows by noting that Algorithm~\ref{alg:lem} sets $\contract'$ as stated in the lemma, and chooses $S'$ as a demand set at these prices.
\end{proof}

\subsection{Proof of Theorem~\ref{thm:constant-gap}}\label{sec:proof-of-constant-gap}

We are now ready to prove Theorem~\ref{thm:constant-gap}. The high-level idea is to 
distinguish cases 
based on the correlated equilibrium $\dist^\star$ at contract $\alpha^\star$, and whether there is a ``significant'' agent, namely, an agent such that $\alpha^*_i > 3/4$ and $(1-\alpha_i^\star) \cdot \E[f(S_i^\star)] \geq 4 \cdot \E[f(S_{-i}^\star)]$, or not. If there is a significant agent, we show that we can get a good pure Nash equilibrium by 
incentivizing that agent alone. 

If there is no significant agent, 
then either (i) there is an agent with $\alpha^*_i > 3/4$ but $(1-\alpha_i^\star) \cdot \E[f(S_i^\star)] \leq 4 \cdot \E[f(S_{-i}^\star)]$, or (ii) $\alpha^*_i < 3/4$ for all agents $i$.
In case (i), we show that 
dropping agent $i$ and applying the Scaling-for-Existence Lemma to the remaining agents yields a good pure Nash equilibrium.
(Note that there can be at most one agent with $\alpha^\star_i > 3/4$ and that $\sum_{i' \neq i}\alpha^\star_{i'} \leq 1/4$.) 
In case (ii), we argue that the 
agents can be partitioned into two groups $B_1, B_2$ such that $\sum_{i' \in B_{\ell}} \alpha^\star_{i'} \leq 3/4$ for $\ell \in \{1,2\}$, and applying the Scaling-for-Existence Lemma to the better of the groups gives a good pure Nash equilibrium.

\begin{algorithm}[t]
\caption{Black-Box Lifting for XOS Rewards\label{alg:lift-xos}}
   \hspace*{\algorithmicindent} \textbf{Input:}  Costs $c_1,\ldots,c_m \in \reals_{\geq 0}$, value and demand oracle access to a XOS function $f:2^A \rightarrow \reals_{\geq 0}$, a contract $\contract^\star$, and a corresponding CCE $\dist^\star$. \\
    \hspace*{\algorithmicindent} \textbf{Output:}  A contract $\contract$, and a PNE $S$ with  $(1-\sum_i \alpha_i)f(S)\geq \Omega(1)(1-\sum_{i}\alpha_i^\star)E_{S^\star\sim\dist^\star}[f(S^\star)]$.
\begin{algorithmic}[1]
\If{  $\max_j \alpha^\star_{j} > 3/4 $}
\State Let $i =\arg\max_j \alpha^\star_{j}$ \Comment{There must be a unique maximum}
\If{$(1-\alpha_i^\star) \cdot \E_{S^\star\sim\dist^\star}[f(S_i^\star)] \geq 4 \cdot \E_{S^\star\sim \dist^\star}[f(S_{-i}^\star)]$}
\State Set $\contract$ such that $\alpha_{i} = \frac{1+\alpha_{i}^\star}{2} $ and $\alpha_j=0$ for $j \neq i$
\State Let $S \in \arg\max_{S'\subseteq A_{i}} (f(S')-\frac{c(S')}{\alpha_{i}})$
\State \Return $\contract,S$
\Else
\State Apply Algorithm~\ref{alg:lem}  
with $\dist^\star$, $\contract^\star$, $N' = [n]\setminus \{i\}$ and $\gamma=2$ to obtain $\contract$ and $S$
\State \Return $\contract,S$
\EndIf
\Else
\State Partition $N$ into two bundles $B_1,B_2$, where for each $\ell \in\{1,2\}$, $\sum_{i\in B_\ell} \alpha_i^\star \leq 3/4$
\State Let $\ell\in \arg\max_{\ell'\in\{1,2\} } \E_{S^\star\sim \dist^\star}[f(\bigcup_{i\in B_{\ell'}}S_i^\star)]$ 
\State Apply Algorithm~\ref{alg:lem}   
with $\dist^\star$, $\contract^\star$, $N' = B_\ell$ and $\gamma=\frac{7}{6}$ to obtain $\contract$ and $S$
\State \Return $\contract,S$
\EndIf
\end{algorithmic}
\end{algorithm}

\begin{proof}[Proof of Theorem~\ref{thm:constant-gap}]
Consider any contract $\contract^\star$ and any coarse-correlated equilibrium $\dist^\star$ with respect to $\contract^\star$, given as input to Algorithm~\ref{alg:lift-xos}. We analyze the guarantee provided by the contract computed by this algorithm. In the remainder of the proof, all expectations are over $S^\star$ that is distributed according to $\dist^\star$. 
We consider three cases:

\bigskip

\noindent \textbf{Case A:} There exists an agent $i$ with  $\alpha^\star_{i} > 3/4 $ and $(1-\alpha_i^\star) \cdot \E[f(S_i^\star)] \geq 4 \cdot \E[f(S_{-i}^\star)]$.
In this case, since $\alpha^\star_i > 3/4$, we have $\E[f(S_i^\star)] \geq 16 \cdot \E [f(S_{-i}^\star)]$. 
    By subadditivity of $f$, this implies that $\E[f(S_i^\star)] \geq 16 \cdot \E[(f(S^\star) - f(S_i^\star))]$, or equivalently, $\E[f(S_i^\star)] \geq \frac{16}{17} \cdot \E[f(S^\star)]$.  
    
    Consider contract $\contract$ with $\alpha_i= \frac{1 + \alpha^\star_{i}}{2}$ and $\alpha_{j} = 0$ for $j\neq i$. Note that $\alpha_{i} > \alpha^\star_i$. Also note that for agents $j \neq i$ doing nothing is a best response no matter what the other agents do. 
    Let $S = (S_i,\emptyset)$ be any pure Nash equilibrium of $\contract$ (such that $S'_{j} = \emptyset$ for all $j \neq i$). 
    We next show that $f(S_i) \geq \frac{1}{2} \cdot \E[f(S_i^\star)]$.
   
    First observe that since $S = (S_i, \emptyset)$ is a pure Nash equilibrium of $\contract$, we have
    \begin{align}
    \alpha_i f(S_i) - c(S_i) \geq 
    \E[\alpha_i f(S_i^\star) - c(S_i^\star)].
    \label{eq:one}
    \end{align}
    On the other hand, since $\mathcal{D}$ is a coarse-correlated equilibrium of $\contract^\star$, it must hold that 
    \[
    \E [\alpha_i^\star f(S_i^\star \mid S_{-i}^\star) - c(S_i^\star)] \geq \E [\alpha_i^\star f(S_i \mid S_{-i}^\star) - c(S_i)].
    \]
    By subadditivity of $f$, we have $f(S_i^\star \mid S_{-i}^\star) = f(S^\star_i \cup S_{-i}^\star) - f(S_{-i}^\star) \leq f(S^\star_i)$. By monotonicity of $f$, we have  $f(S_i \mid S^\star_{-i}) = f(S_i \cup S^\star_{-i}) - f(S^\star_{-i}) \geq f(S_i) - f(S^\star_{-i})$. We thus obtain
    \begin{align}
    \E[\alpha_i^\star f(S_i^\star) - c(S_i^\star)] \geq \E[\alpha_i^\star f(S_i) - \alpha_i^\star f(S_{-i}^\star) - c(S_i)].
    \label{eq:two}
    \end{align}
    By summing up~\eqref{eq:one} and~\eqref{eq:two}, using linearity of expectation, we get 
    \begin{equation}
    (\alpha_i - \alpha_i^\star) f(S_i) \geq (\alpha_i - \alpha_i^\star) \cdot \E[f(S_i^\star)] - \alpha_i^\star \cdot \E[f(S_{-i}^\star)]. \label{eq:alphaalpha}
    \end{equation}
    Thus, using that $\alpha_i > \alpha_i^\star$,
    \begin{eqnarray}
    f(S_i)  & \geq & \E[f(S_i^\star)] - \frac{\alpha_i^\star}{\alpha_i - \alpha_i^\star} \E[f(S_{-i}^\star)] = \E[f(S_i^\star)] - \frac{2\alpha_i^\star}{1 - \alpha_i^\star} \E[f(S_{-i}^\star)] \nonumber 
    \\ & \geq & \E[f(S_i^\star)] - \frac{2\alpha_i^\star}{1 - \alpha_i^\star}\cdot  \frac{1-\alpha_i^*}{4} \cdot \E[f(S_{i}^\star)] \geq \frac{1}{2} \E[f(S_{i}^\star)]  \label{eq:s'1-old},
    \end{eqnarray}
where the first inequality is by rearranging Inequality~\eqref{eq:alphaalpha}, the equality is by the definition of $\alpha_i$, the second inequality is by the assumption of the case, and the last inequality is since $\alpha_i^\star\leq 1$.

 Overall, the principal's utility under contract $\contract$ and equilibrium $S$ is
    \begin{align*}
    \bigg(1 - \sum_{j} \alpha_{j}\bigg) \cdot f(S) & = (1-\alpha_i) \cdot f(S_i) = \frac{1}{2}(1-\alpha_i^\star) \cdot f(S_i) \geq 
    \frac{1}{4}(1-\alpha_i^\star) \cdot \E[f(S_i^\star)] \\
    &\geq \frac{1}{4} \bigg(1 - \sum_{j} \alpha^\star_{j}\bigg) \cdot \E[f(S^\star_i)] \geq \frac{4}{17} \bigg(1 - \sum_{j} \alpha^\star_{j}\bigg) \cdot \E[f(S^\star)],
    \end{align*}
    where the second equality follows by the definition of $\alpha_i$. 
This concludes the argument for this case.

    \bigskip

    \noindent \textbf{Case B:} There exists an agent $i$ with $\alpha_i^\star > 3/4 $ and $(1-\alpha_i^\star) \cdot \E[f(S_i^\star)] \leq 4 \cdot \E[f(S_{-i}^\star)]$.
    In this case, let $\contract$ be the contract where $\alpha_i=0$ and $\alpha_j=2\alpha_j^\star$ for $j\neq i$.
    By applying Lemma~\ref{lem:double} on $\dist^\star$, $\contract^\star$, $N' = [n]\setminus \{i\}$ and $\gamma=2$ 
    we get that there exists a pure Nash equilibrium $S$ with respect to contract $\contract$ such that \begin{equation}
        f(S) \geq \frac{1}{2} \Ex{f(\bigcup_{j \in N'} S_j^\star)} . \label{eq:s'}
        \end{equation}
     We can bound the principal's utility under $\contract^\star$ and 
     $\dist^\star$ by 
     \begin{eqnarray}
     \left(1-\sum_j \alpha_j^\star\right)\Ex{f( S^\star)} & \leq & \left(1-\sum_j \alpha_j^\star\right)\Ex{f( S^\star_i)} + \left(1-\sum_j \alpha_j^\star\right)\Ex{f( S^\star_{-i})}   
     \nonumber \\ & \leq & \left(1- \alpha_i^\star\right)\Ex{f( S^\star_i)} + \Ex{f( S^\star_{-i})}   \leq 5 \cdot \Ex{f( S^\star_{-i})}, \label{eq:s-i}
     \end{eqnarray}
     where the first inequality is by subadditivity, and the last inequality is by the assumption of the case.

     One the other hand, under contract $\contract$ (for which $\sum_j \alpha_j = \sum_{j\neq i} 2\alpha^\star_j \leq  2(1- \alpha^\star_i) \leq \frac{1}{2}$), and equilibrium $S$, the principal's utility is  
     $$(1-\sum_j \alpha_j) f(S) \stackrel{\eqref{eq:s'}}{\geq }  (1- \sum_{j\neq i } 2\alpha^\star_j) \cdot   \frac{1}{2} \Ex{f(\bigcup_{j \in N'} S_j^\star)} \geq \frac{1}{4}   \Ex{f( S_{-i}^\star)} \stackrel{\eqref{eq:s-i}}{\geq} \frac{1}{20}   \left(1-\sum_j \alpha_j^\star\right) \Ex{f( S^\star)} ,$$
     which concludes the proof of the case.

    \bigskip

\noindent \textbf{Case C:} $\alpha_i^\star \leq 3/4$ for every $i$.
We claim that the agents can be partitioned into two bundles $B_1,B_2$, where for each $\ell \in\{1,2\}$, $\sum_{i\in B_\ell} \alpha_i^\star \leq 3/4$. To see this, consider the following process. 
We start by creating a separate bundle for each agent. Note that this way, by the assumption of the case, each bundle has sum of $\alpha^\star_i$ at most $3/4$. 
Then, as long as there are two bundles with sum of $\alpha^\star_i$ less than $3/4$ we merge them. This process is well-defined and terminates with two bundles with the desired property since as long as there are more than two bundles, since $\sum_i \alpha_i^\star \leq 1$, there must be two bundles with sum of $ \alpha_i^\star $ at most $2/3$, so we can merge two of them.

Now, assume without loss of generality that \begin{equation}
\E[f(\bigcup_{i\in B_1} S_i^\star)] \geq \E[f(\bigcup_{i\in B_2} S_i^\star)]. \label{eq:assum}
\end{equation}
Let $\contract$ be the contract where $\alpha_i=0$  for $i\in B_2$ and $\alpha_i=\frac{7}{6}\alpha_i^\star$ for $i\in B_1$.
    By applying Lemma~\ref{lem:double} on $\dist^\star$, $\contract^\star$, $N' = B_1$ and $\gamma=7/6$ we get that there exists an equilibrium $S$ with respect to contract $\contract$ such that \begin{equation}
        f(S) \geq \frac{1}{7} \Ex{f(\bigcup_{i \in B_1} S_i^\star)}.  \label{eq:s'2}
        \end{equation}

The principal's utility from $\contract$ and $S$ is 
\begin{eqnarray*}
     (1-\sum_i \alpha_i) f(S)  & \geq & (1-\frac{7}{6} \sum_{i\in B_1} \alpha_i^\star) \cdot \frac{1}{7}  \cdot \Ex{f(\bigcup_{i \in B_1} S_i^\star)}  \geq (1-\frac{7}{6} \cdot \frac{3}{4})  \cdot\frac{1}{7} \cdot  \Ex{f(\bigcup_{i \in B_1} S_i^\star)} \\ &  \geq  & \frac{1}{56} \frac{\Ex{f(\bigcup_{i \in B_1} S_i^\star)}+\Ex{f(\bigcup_{i \in B_2} S_i^\star)}}{2}  \geq \frac{\E[f(S^\star)]}{112},
     \end{eqnarray*}
where the first inequality is by the definition of $\contract$ and by Inequality~\eqref{eq:s'2}, the second inequality is since $\sum_{i\in B_1} \alpha^\star_j\leq 3/4$, the third inequality is since by  Inequality~\eqref{eq:assum}, and the last inequality is by subadditivity.
This concludes the proof of the theorem.
\end{proof}

\subsection{Robustness and Tractability}
\label{sec:robust-algo}

We next derive our robust approximation results. To this end, we show that for submodular rewards it is possible to turn any contract $\contract^\star$ and pure Nash equilibrium $S^\star$ of that contract into a contract $\contract$ such that any CCE under $\contract$ achieves a constant approximation to the principal's utility under $S^\star$.

\begin{theorem}[Black-Box Robustness Theorem]\label{thm:constant-gap-sub-CCE}
Let $f$ be a submodular reward function. 
There exists an algorithm (Algorithm~\ref{alg:robust-submod} in Appendix~\ref{app:constant-gap-sub-CCE}) that runs in polynomial time (in $n$ and $m$) using only value queries to $f$, that, given a contract
$\contract^\star$ and a corresponding PNE $S^\star$, outputs a contract
$\contract$ such that for every CCE $\dist$ of $\contract$,
\[
\left(1 - \sum_{i} \alpha_i \right)
\cdot \E_{S \sim \dist}[f(S)]
\;\ge\;
\Omega(1) \cdot
\left(1 - \sum_{i} \alpha_i^\star \right)
\cdot f(S^\star).
\]
\end{theorem}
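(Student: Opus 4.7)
The plan is to establish a \emph{Scaling-for-Robustness Lemma} (Lemma~\ref{lem:scaling-for-robustness}) that upgrades the Doubling Lemma of \cite{DEFK25} from PNE to CCE, and then apply it in a case analysis mirroring Theorem~\ref{thm:constant-gap} to construct $\contract$ from $\contract^\star$ and $S^\star$. First I would split on $(\contract^\star,S^\star)$ along the same three cases as in Theorem~\ref{thm:constant-gap}: either (A) there is a dominant agent $i^\star$ with $\alpha_{i^\star}^\star>3/4$ and $(1-\alpha_{i^\star}^\star)f(S_{i^\star}^\star)\ge 4\,f(S_{-i^\star}^\star)$, or (B) there is an agent with $\alpha_{i^\star}^\star>3/4$ but the reward is not concentrated on $i^\star$, or (C) $\alpha_i^\star\le 3/4$ for every $i$, so the agents admit a partition into two bundles with share-sum at most $3/4$.

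In the dominant-agent case (A), I set $\alpha_{i^\star}=(1+\alpha_{i^\star}^\star)/2$ and $\alpha_j=0$ for $j\neq i^\star$. Dropout stability (Observation~\ref{obs:cce-stable}) forces $S_j=\emptyset$ almost surely for $j\neq i^\star$ under any CCE $\dist$ of $\contract$. The key move is to combine two inequalities: the CCE condition for $i^\star$ at $\contract$ with deviation to the fixed set $S_{i^\star}^\star$, and the PNE condition for $i^\star$ at $\contract^\star$ with deviation averaged over $R_{i^\star}\sim\dist_{i^\star}$. The $c(S_{i^\star}^\star)$ and $\E[c(S_{i^\star})]$ terms cancel, and combined with monotonicity and the subadditivity bound $f(S^\star)\le f(S_{i^\star}^\star)+f(S_{-i^\star}^\star)\le (1+(1-\alpha_{i^\star}^\star)/4)f(S_{i^\star}^\star)$ from the case assumption, this yields $\E_{S\sim\dist}[f(S_{i^\star})]\ge f(S_{i^\star}^\star)/2$ --- crucially a constant independent of $\alpha_{i^\star}^\star$, so $(1-\alpha_{i^\star})\E[f(S_{i^\star})]\ge (1-\alpha_{i^\star}^\star)f(S_{i^\star}^\star)/4$ is a constant fraction of the target $(1-\sum_i\alpha_i^\star)f(S^\star)$.

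In cases (B) and (C) I would scale by a constant $\beta>1$ on a bundle $N'$ with $\sum_{i\in N'}\alpha_i^\star$ bounded away from $1$, setting $\alpha_i=\beta\alpha_i^\star$ for $i\in N'$ and $\alpha_i=0$ otherwise, so that $(1-\sum_i\alpha_i)$ remains a constant fraction of $(1-\sum_i\alpha_i^\star)$. For any CCE $\dist$ of $\contract$ I would sum the CCE inequality with deviation $R_i=S_i^\star$ over $i\in N'$. The analysis is then driven by two submodular identities: (a) $\sum_i [f(S_{-i}\cup S_i^\star)-f(S_{-i})]\ge f(S\cup S^\star)-f(S)\ge f(S^\star)-f(S)$, obtained by telescoping $f(S\cup S^\star)-f(S)$ over an ordering of agents and using diminishing marginals; and (b) $\sum_i f(S_{-i})\ge (n-1)f(S)$, obtained similarly. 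Plugging (a), (b), the dropout-stability cost bound $\E[c(S_i)]\le \alpha_i\E[f(S)-f(S_{-i})]$, the PNE cost bound $c(S^\star)\le \sum_i\alpha_i^\star f(S_i^\star\mid S_{-i}^\star)$, and the slack from $\beta>1$ into the summed inequalities cancels the $\E[f(S_{-i})]$ terms and isolates $\E_{S\sim\dist}[f(S)]\ge \Omega(1)\cdot f(S^\star)$.

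Algorithmically, the case test, the bundle construction, and the scaled contract are all computable from $\contract^\star$ and $S^\star$ using $O(n)$ value queries on subsets of $S^\star$; no demand oracle is required. I expect the hardest step to be case (C): summing CCE inequalities with varying per-agent weights $\alpha_i^\star$ does not immediately match the uniform-weight submodular identities (a) and (b), so handling this cleanly will require either a sub-case split grouping agents by the magnitude of $\alpha_i^\star$ or a more careful charging argument that absorbs the weight variation into the slack provided by $\beta$. A related subtlety is that the random set $S_{-i}$ under $\dist$ is not comparable (in set inclusion) to $S_{-i}^\star$, and identities (a) and (b) are precisely what sidestep this obstacle by evaluating all marginals at $S_{-i}$.
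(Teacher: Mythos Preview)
Your outline tracks the paper's proof closely in structure, but there are two genuine gaps.

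\textbf{Zero-cost actions.} In your Case~(A) you assert that dropout stability ``forces $S_j=\emptyset$ almost surely for $j\neq i^\star$'' when $\alpha_j=0$. That is false: dropout stability only gives $\E[c(S_j)]=0$, so $S_j$ may contain any subset of the zero-cost actions $A_0=\{j:c_j=0\}$. This matters because once you add the two inequalities, the resulting bound becomes
\[
\E_{S\sim\dist}[f(S)] \;\ge\; f(S_{i^\star}^\star)-\frac{2\alpha_{i^\star}^\star}{1-\alpha_{i^\star}^\star}\bigl(f(S_{-i^\star}^\star)+\E[f(S_{-i^\star})]\bigr),
\]
and $\E[f(S_{-i^\star})]$ can be as large as $f(A_0)$, which is uncontrolled by the dominant-agent case assumption. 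The paper handles this by inserting a \emph{preliminary} case: if $f(A_0)$ is already a constant fraction of $(1-\sum_i\alpha_i^\star)f(S^\star)$, return the tiny uniform contract $\vec\epsilon$ (so that every CCE has $\E[f(S)]\ge\Omega(1)f(A_0)$ by the Scaling-for-Robustness Lemma applied to the zero contract). Otherwise $f(A_0)$ is small enough to absorb into the above bound. Without this extra case your Case~(A) does not close.

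\textbf{The weight mismatch you flag is real, and the fix is not bucketing.} Your plan for Cases~(B)/(C) is to sum the $\alpha_i$-weighted CCE inequalities and then invoke the unweighted submodular identities (a) and (b). As you yourself note, the weights and the identities do not match, and neither bucketing by magnitude nor ``absorbing into the slack from $\beta$'' will yield a constant. The paper's move is simpler and decisive: \emph{divide} each agent's CCE inequality by $\gamma\alpha_i^\star+\epsilon$ \emph{before} summing. This makes all the $f$-terms unweighted, so identities (a) and (b) apply directly; the only weighted quantities left are the cost terms $\sum_i \frac{c(S_i^\star)-c(S_i)}{\gamma\alpha_i^\star+\epsilon}$, and these are controlled by dropout stability of the PNE $S^\star$ (giving $c(S_i^\star)/\alpha_i^\star\le f(S_i^\star\mid S_{-i}^\star)$) together with $\sum_i f(S_i^\star\mid S_{-i}^\star)\le f(S^\star)$ from submodularity. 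The $\epsilon$ is there precisely to make the division well-defined even when some $\alpha_i^\star=0$. This normalization is the missing idea; once you have it, identity~(b) is not needed at all.
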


Using \citep{DEFK23} and \citep{DEFK25} and combining with Theorem~\ref{thm:constant-gap} we obtain the following corollary:

\begin{corollary}[Efficient Robust Approximation Algorithms]
    \label{cor:robustness}
    For submodular rewards there is an algorithm that runs in polynomial time using value and demand queries and finds a contract $\contract$ such that any CCE of $\contract$ obtains an $O(1)$-approximation to the principal's optimal utility under the best CCE under any contract. For binary actions the same guarantee can be achieved with value queries only.
\end{corollary}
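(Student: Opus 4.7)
The plan is to chain three results together: an existing algorithm for approximating the best PNE (from \cite{DEFK23,DEFK25}), the Black-Box Robustness Theorem (Theorem~\ref{thm:constant-gap-sub-CCE}), and the Black-Box Lifting Theorem (Theorem~\ref{thm:constant-gap}) --- the first two to build the algorithm, and the third to argue that its output also competes with the best CCE under any contract. No new technical machinery is needed beyond these three results.

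\medskip

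\noindent\textbf{Construction of the contract.} First, I would invoke the algorithm of \cite{DEFK25} (building on \cite{DEFK23}) to compute, in polynomial time with value and demand queries to $f$, a contract $\tilde\contract$ together with a PNE $\tilde S$ of $\tilde\contract$ whose principal utility is within a constant factor of the best PNE attainable under any contract. In the binary-actions case, value queries alone suffice for this step. Then I would feed the pair $(\tilde\contract, \tilde S)$ into Algorithm~\ref{alg:robust-submod} from Theorem~\ref{thm:constant-gap-sub-CCE}, which uses only value queries and outputs a contract $\contract$ such that \emph{every} CCE $\dist$ of $\contract$ satisfies
\[
\bigl(1-\textstyle\sum_i \alpha_i\bigr)\cdot \E_{S\sim\dist}[f(S)] \;\geq\; \Omega(1)\cdot \bigl(1-\textstyle\sum_i \tilde\alpha_i\bigr)\cdot f(\tilde S).
\]
The overall oracle and runtime bounds follow immediately from the two algorithms being composed.

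\medskip

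\noindent\textbf{Benchmark argument via Lifting.} It remains to compare the right-hand side above to the best CCE under any contract. Let $\contract^\star$ be an optimal contract and let $\dist^\star$ be an optimal CCE of $\contract^\star$; denote their principal utility by $\mathrm{OPT}_{\text{CCE}}$. By Theorem~\ref{thm:constant-gap}, there \emph{exist} a contract $\contract'$ and a PNE $S'$ of $\contract'$ such that
\[
\bigl(1-\textstyle\sum_i \alpha'_i\bigr)\cdot f(S') \;\geq\; \Omega(1)\cdot \mathrm{OPT}_{\text{CCE}}.
\]
Because $(\tilde\contract,\tilde S)$ is a constant-factor approximation to the best PNE under any contract, it is in particular a constant-factor approximation to $(\contract',S')$, hence to $\mathrm{OPT}_{\text{CCE}}$. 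Chaining the two constant losses with the Robustness guarantee yields that every CCE $\dist$ of the returned contract $\contract$ has principal utility $\Omega(1)\cdot \mathrm{OPT}_{\text{CCE}}$, which is exactly the corollary.

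\medskip

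\noindent\textbf{Expected obstacle.} The reasoning is essentially bookkeeping, so the main subtlety will be verifying that Theorem~\ref{thm:constant-gap} is applied in the correct direction: the theorem is a ``there exists a contract'' statement used only as an \emph{existential} benchmark bound on the best PNE, not as part of the algorithm itself; the algorithm's running time only depends on the two computational primitives (the near-optimal PNE from \cite{DEFK25} and Algorithm~\ref{alg:robust-submod}), neither of which needs access to $\contract^\star$ or $\dist^\star$. Once this is cleanly separated, the chain of constant-factor losses composes without issue, and the oracle-complexity claim (value + demand in general, value-only in the binary-action case) reduces to the oracle complexity of the ingredient algorithms.
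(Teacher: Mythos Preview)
Your proposal is correct and matches the paper's intended argument essentially line for line: the paper simply states that the corollary follows by combining the near-optimal-PNE algorithm of \cite{DEFK23,DEFK25} with Theorem~\ref{thm:constant-gap-sub-CCE} (robustness) on the algorithmic side, and using Theorem~\ref{thm:constant-gap} (lifting) existentially to upgrade the benchmark from best PNE to best CCE --- exactly the three-step chain you lay out, including the observation that the lifting theorem is used only as a benchmark bound and not as part of the algorithm.
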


The key tool for proving Theorem~\ref{thm:constant-gap-sub-CCE} 
is a Scaling-for-Robustness Lemma---a strengthened version of the Doubling Lemma from \cite{DEFK25}.\footnote{While the Doubling Lemma by \cite{DEFK25} applies to any PNE of the scaled contract, our Scaling-for-Robustness Lemma applies even with respect to any CCE of the scaled contract.}
This lemma plays a role orthogonal to the 
Scaling-for-Existence Lemma: rather than showing the existence of a contract and a PNE which is good relative to a reference CCE, 
it is used to derive a contract under
which \emph{every} CCE is good with respect to a reference PNE. Similar to the Scaling-for-Existence Lemma, rather than working directly
with the reference equilibrium, it starts from a dropout stable distribution over sets of actions.

\begin{lemma}
[Scaling-for-Robustness Lemma]
\label{lem:doubling-cce}\label{lem:scaling-for-robustness}
Suppose $f$ is submodular. Let $\epsilon > 0$ and let $\vec{\epsilon} = (\epsilon,\ldots,\epsilon) \in \reals_+^n$. Let $\gamma > 1$.  
Let $\dist$ be a dropout-stable distribution with respect to $\contract$.
Then any coarse correlated equilibrium (CCE) $\SSeq{}$ with respect to $\gamma \contract + \vec{\epsilon}$ satisfies $\E_{\Seq{} \sim \SSeq{}}[f(\Seq{})] \geq \frac{1}{2} (1 - \frac{1}{\gamma}) \E_{S \sim \dist}[f(S)]$. 
\end{lemma}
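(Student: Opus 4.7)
The plan is to exploit the CCE condition of $\SSeq{}$ at the scaled contract $\gamma\contract+\vec{\epsilon}$ by having each agent $i$ consider the random deviation ``play $S_i$,'' where $S \sim \dist$ is drawn independently of $\Seq{}$. Writing $\beta_i := \gamma \alpha_i + \epsilon > 0$ and averaging the CCE inequality over $T_i = S_i$ with $S\sim\dist$ yields, for each agent~$i$,
\[
\beta_i \E[f(\Seq{})] - \E[c(\Seq{}_i)] \;\geq\; \beta_i \E_{\Seq{},S}[f(\Seq{}_{-i} \cup S_i)] - \E_S[c(S_i)].
\]
Dividing through by $\beta_i$, summing over all $n$ agents, and discarding the nonnegative term $\sum_i \E[c(\Seq{}_i)]/\beta_i$ from the left gives
\[
n\, \E[f(\Seq{})] \;\geq\; \sum_i \E[f(\Seq{}_{-i} \cup S_i)] \;-\; \sum_i \E[c(S_i)]/\beta_i.
\]

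The crux will be a submodularity-based bound: for every $\Seq{}$ and $S$,
\[
\sum_i f(\Seq{}_{-i} \cup S_i) \;\geq\; (n-2)\,f(\Seq{}) + f(S).
\]
I will prove this by writing $f(\Seq{}_{-i} \cup S_i) = f(\Seq{}_{-i}) + f(S_i \mid \Seq{}_{-i})$ and applying decreasing marginals (since $\Seq{}_{-i} \subseteq \Seq{}$) to get $f(S_i \mid \Seq{}_{-i}) \geq f(S_i \mid \Seq{})$. Then two standard submodular facts kick in: the telescoping inequality $\sum_i f(\Seq{}_i \mid \Seq{}_{-i}) \leq f(\Seq{})$ gives $\sum_i f(\Seq{}_{-i}) \geq (n-1)f(\Seq{})$, and subadditivity of the shifted function $T \mapsto f(\Seq{} \cup T) - f(\Seq{})$ (which is submodular with value $0$ at $\emptyset$) gives $\sum_i f(S_i \mid \Seq{}) \geq f(S \mid \Seq{}) \geq f(S) - f(\Seq{})$. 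Adding these yields the claimed bound. Substituting back cancels most of the $n$'s and leaves $2\,\E[f(\Seq{})] \geq \E[f(S)] - \sum_i \E[c(S_i)]/\beta_i$.

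To finish, I will control the cost term via dropout-stability of $\dist$ at $\contract$, which gives $\E[c(S_i)] \leq \alpha_i \E[f(S_i \mid S_{-i})]$ for each $i$ (with $\alpha_i = 0$ forcing $\E[c(S_i)] = 0$, so no division issue arises). Combined with $\beta_i \geq \gamma \alpha_i$ and one more use of $\sum_i f(S_i \mid S_{-i}) \leq f(S)$ from submodularity, this yields $\sum_i \E[c(S_i)]/\beta_i \leq \E[f(S)]/\gamma$, whence $2\E[f(\Seq{})] \geq (1 - 1/\gamma)\E[f(S)]$, as claimed.

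The main obstacle I anticipate is identifying the correct way to aggregate the $n$ single-agent CCE inequalities. Using any one agent's deviation $T_i = S_i$ in isolation only yields $\E[f(\Seq{})] \geq (1-1/\gamma) \E[f(S_i)]$; since in the worst case $\max_i \E[f(S_i)] \geq \E[f(S)]/n$, this loses an $n$ factor. Dividing by $\beta_i$ \emph{before} summing, and pairing the $n\E[f(\Seq{})]$ on the left with the sharp bound $\sum_i f(\Seq{}_{-i} \cup S_i) \geq (n-2)f(\Seq{}) + f(S)$ on the right, is the step that lets the $n$-factors cancel and reduces the loss to the constant $\tfrac{1}{2}$ stated in the lemma.
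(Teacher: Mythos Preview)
Your proposal is correct and follows essentially the same approach as the paper: both divide the CCE inequality by $\beta_i=\gamma\alpha_i+\epsilon$, sum over agents, discard the nonnegative $\sum_i c(\Seq{}_i)/\beta_i$ term, and bound the cost sum via dropout-stability and $\sum_i f(S_i\mid S_{-i})\leq f(S)$. Your pointwise inequality $\sum_i f(\Seq{}_{-i}\cup S_i)\geq (n-2)f(\Seq{})+f(S)$ is algebraically equivalent to the paper's combination of $\sum_i f(\Seq{}_i\mid \Seq{}_{-i})\leq f(\Seq{})$ and $\sum_i f(S_i\mid \Seq{}_{-i})\geq f(S\setminus \Seq{}\mid \Seq{})$ (note $f(S\setminus \Seq{}\mid \Seq{})=f(S\mid \Seq{})$), just packaged differently.
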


\begin{proof}
Let $\dist$ be a dropout-stable distribution with respect to $\contract$, 
and let $\SSeq{}$ be a CCE with respect to $\gamma\contract+\vec{\epsilon}$. 
{As $\SSeq{}$ is a CCE with respect to $\gamma \contract + \vec{\epsilon}$, agent $i$ weakly prefers taking action set $\Seq{}_i$ drawn from $\SSeq{}$ over $S_i$ drawn independently from $\dist$. 
That is,
\[
\E_{\Seq{} \sim \SSeq{}}
[(\gamma\alpha_i+\epsilon)f(\Seq{}) - c(\Seq{}_i)] \geq 
(\gamma\alpha_i+\epsilon)
\E_{\Seq{} \sim \SSeq{}, S \sim \dist}[f(\Seq{}_{-i} \cup S_i)] - c(S_i).
\]}
Rearranging the terms and taking the sum over all agents $i$, we obtain
\begin{equation}
\label{eq:exp-greater-0}
\E_{\Seq{} \sim \SSeq{}, S \sim \dist}
\left[\sum_{i = 1}^n f(\Seq{}_i\mid \Seq{}_{-i}) - \sum_{i = 1}^n f(S_i\mid  \Seq{}_{-i}) + \sum_{i = 1}^n \frac{c(S_i)- c(\Seq{}_i)}{\gamma\alpha_i+\epsilon}\right] \geq 0.
\end{equation}

By submodularity of $f$, for any $\Seq{}$ it holds that
\begin{equation}
\label{eq:eq1}
    \sum_{i = 1}^n f(\Seq{}_i\mid \Seq{}_{-i}) \leq f(\Seq{}),
\end{equation}
while for any $S,\Seq{}$ it holds that 
\begin{equation}
\label{eq:eq2}
\sum_{i = 1}^n f(S_i\mid  \Seq{}_{-i}) \geq f(S \setminus \Seq{} \mid \Seq{})
\end{equation}

Also, as $\dist$ is dropout-stable, we furthermore have for every agent $i$ that $\E_{S \sim \dist}[\alpha_i f(S) - c(S_i)] \geq \E_{S \sim \dist}[\alpha_i {f(S_{-i})}]$, implying that 
\begin{equation}
\label{eq:subset-stable}
    \frac{\E_{S \sim \dist}[c(S_i)]}{\alpha_i} \leq \E_{S \sim \dist}[f(S_i \mid S_{-i})].
\end{equation}
Therefore, it holds that
\begin{align}
\E_{\Seq{} \sim \SSeq{}, S \sim \dist}
\left[\sum_{i = 1}^n
  \frac{c(S_i)- c(\Seq{}_i)}{\gamma\alpha_i+\epsilon}\right] &\leq \sum_{i = 1}^n
  \frac{\E_{S \sim \dist}[c(S_i)]}{\gamma\alpha_i+\epsilon} \notag\\
  &\leq \frac{1}{\gamma} \sum_{i = 1}^n \frac{\E_{S \sim \dist}[c(S_i)]}{\alpha_i} 
  \leq \frac{1}{\gamma} \sum_{i = 1}^n \E_{S \sim \dist}[f(S_i \mid S_{-i})] \leq \frac{\E_{S \sim \dist}[f(S)]}{\gamma}, \label{eq:eq3}
\end{align}
where the next-to-last inequality follows by Equation~(\ref{eq:subset-stable}), and the final one holds by submodularity of $f$.

Combining Inequalities~(\ref{eq:eq1}), (\ref{eq:eq2}), we get that for any $S,\Seq{}$,
\[
\sum_{i = 1}^n f(\Seq{}_i\mid \Seq{}_{-i}) - \sum_{i = 1}^n f(S_i\mid  \Seq{}_{-i}) \leq f(\Seq{}) - f(S \setminus \Seq{} \mid \Seq{}) = 2 f(\Seq{}) - f(S \cup \Seq{}) \leq 2 f(\Seq{}) - f(S).
\]

Taking expectation over the last inequality and combining it with Equation~(\ref{eq:exp-greater-0}) and (\ref{eq:eq3}) gives 
\begin{align*}
0 &\leq
\E_{\Seq{} \sim \SSeq{}, S \sim \dist}
\left[\sum_{i = 1}^n f(\Seq{}_i\mid \Seq{}_{-i}) - \sum_{i = 1}^n f(S_i\mid  \Seq{}_{-i}) + \sum_{i = 1}^n \frac{c(S_i)- c(\Seq{}_i)}{\gamma\alpha_i+\epsilon}\right] \\
&
\leq \E_{\Seq{} \sim \SSeq{}, S \sim \dist}\left[2 f(\Seq{}) - (1 - \frac{1}{\gamma}) f(S)\right],
\end{align*}
yielding $\E_{\Seq{} \sim \SSeq{}, S \sim \dist}\left[2 f(\Seq{}) - (1 - \frac{1}{\gamma}) f(S)\right] \geq 0$, and thus $\E_{\Seq{} \sim \SSeq{}}[f(\Seq{})] \geq \frac{1}{2} (1 - \frac{1}{\gamma}) \E_{S \sim \SSeq{}}[f(S)]$, as desired.
\end{proof}

The proof of Theorem~\ref{thm:constant-gap-sub-CCE} concludes by applying Lemma~\ref{lem:doubling-cce} to analyze the contract returned by Algorithm~\ref{alg:robust-submod} (see Appendix~\ref{app:constant-gap-sub-CCE} for details).

\section{Subadditive Rewards}

In this section, we show that for subadditive rewards, there is a $\Theta(\text{poly(n)})$ gap between the principal's utility under the best coarse-correlated equilibrium and the best pure Nash equilibrium. 
We first show that this gap arises already between mixed and pure Nash equilibria, and even with binary actions.

\begin{proposition}[Subadditive Rewards]\label{prop:subadditive-binary}
There exists a binary-action instance with a subadditive reward function
in which the gap between the principal's utility 
under the best MNE and under the best PNE is $\Omega(\sqrt{n})$.
\end{proposition}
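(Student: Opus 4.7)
The plan is to design an explicit binary-action instance on $n$ agents with a monotone subadditive reward function $f$, together with appropriately chosen costs, such that the best PNE yields principal utility $O(1)$ while some MNE yields principal utility $\Omega(\sqrt{n})$; the conclusion is then immediate from these two bounds.

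For the construction, I would partition the $n$ agents into $\sqrt{n}$ buckets of size $\sqrt{n}$, set a common per-action cost $c = \Theta(1/\sqrt{n})$, and define $f$ in a bucket-dependent manner so that (i) $f$ is monotone and subadditive by direct case analysis on how $S$ and $T$ split across buckets; (ii) $f$ is genuinely \emph{not} XOS, so Theorem~\ref{thm:constant-gap} cannot apply; and (iii) the pure-play marginals at sets of size around $\sqrt{n}$ are much smaller than the expected marginals under independent mixing with probability $p = \Theta(1/\sqrt{n})$. The target invariant is that under such mixing, $|S|$ concentrates around $\sqrt{n}$ and delivers $\E_{S\sim \dist}[f(S)] = \Omega(\sqrt{n})$, while any \emph{specific} pure set $S$ with $f(S) = \Omega(\sqrt{n})$ forces the principal, via the IC constraints, to make an $\Omega(1)$ per-taker payment, depleting her total $\sum_i \alpha_i$ budget.

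Given such an $f$, the two bounds proceed as follows. For the \emph{PNE upper bound}, fix $\contract$ and a PNE $S$; summing the IC condition $\alpha_i(f(S) - f(S_{-i})) \geq c_i$ over $i \in S$ and using the lumpy marginal structure of the bucket-based $f$, a case analysis on $|S|$ shows that either $f(S) = O(1)$ or $\sum_i \alpha_i \geq 1 - O(1/\sqrt{n})$; in both cases $(1 - \sum_i \alpha_i)\cdot f(S) = O(1)$. For the \emph{MNE lower bound}, consider the symmetric mixed profile where each agent plays action independently with probability $p = \Theta(1/\sqrt{n})$ under a symmetric contract $\alpha_i = \alpha$; the MNE indifference condition $\alpha \cdot \E_{S_{-i}}[f(S_{-i} \cup \{i\}) - f(S_{-i})] = c$ determines $\alpha$ so that $n\alpha \leq 1/2$, yielding $u_P = (1 - n\alpha)\cdot \E[f(S)] = \Omega(\sqrt{n})$.

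The main obstacle is the construction itself. Any \emph{symmetric} monotone subadditive function on $[n]$ coincides with a concave function of $|S|$ and is therefore submodular (hence XOS), so by Theorem~\ref{thm:constant-gap} the gap would only be constant; the instance must genuinely break symmetry, which motivates the bucket structure. Moreover, establishing that the resulting $f$ is non-XOS requires exhibiting a witness set $T$ together with a ruling-out argument against all candidate additive supporting clauses --- showing that every additive $a \leq f$ must satisfy $a(T) = o(f(T))$ by a linear-programming / averaging argument over the neighborhood of $T$. Calibrating $f$ so that pure-play marginals are both small enough to restrict PNE and structured enough to certify non-XOS, while simultaneously ensuring that mixed marginals concentrate favorably, is the real technical content; once the construction is in place, the PNE and MNE bounds reduce to routine computations with $\E[B(n,p)]$-type quantities.
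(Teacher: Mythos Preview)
Your proposal is a plan rather than a proof: you never write down the reward function $f$, and you explicitly identify ``the construction itself'' as the main obstacle still to be overcome. Without a concrete $f$, neither the PNE upper bound nor the MNE lower bound can be checked, so as written this does not prove the proposition.

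Moreover, the parameters you do commit to are internally inconsistent. With a common cost $c = \Theta(1/\sqrt{n})$ and a symmetric contract $\alpha$ for all $n$ agents, the MNE indifference $\alpha \cdot \E[f(i \mid S_{-i})] = c$ together with your requirement $n\alpha \leq 1/2$ forces $\E[f(i\mid S_{-i})] = \Omega(\sqrt{n})$. By subadditivity $f(i\mid S_{-i}) \leq f(\{i\})$, so $f(\{i\}) = \Omega(\sqrt{n})$ for every $i$. But then the single-agent PNE $\{i\}$ under contract $\alpha_i = c/f(\{i\})$ already gives the principal $(1-\alpha_i)f(\{i\}) = f(\{i\}) - c = \Omega(\sqrt{n})$, destroying the gap you are trying to exhibit. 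So the symmetric ``everyone mixes at $p=\Theta(1/\sqrt{n})$'' template cannot work with these cost scales.

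The paper's construction is quite different and sidesteps this obstruction. It takes costs $c_i = \Theta(1/n)$ (not $1/\sqrt{n}$), keeps $f(\{i\}) = O(1)$, and adjoins two distinguished \emph{zero-cost} agents $x,y$. The function $f$ is hand-crafted so that the marginal of each ordinary agent $i$ at the full set flips between $1$ and $2$ depending on which of $x,y$ are present. In the MNE, all ordinary agents play with probability $1$ while $x$ and $y$ each randomize at probability $1/2$; this averages every ordinary marginal to $1.5$, so the total contract is $2n \cdot \tfrac{c_i}{1.5} = \tfrac{8}{9} < 1$ and the principal keeps a constant fraction of $f \approx 2\sqrt{n}$. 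In any PNE the marginals are pinned at $1$ or $2$ (or $\Theta(1/\sqrt{n})$ for moderate sets), and a short case analysis on $|S|$ bounds the principal's utility by $O(1)$. The engine is thus a tiny amount of zero-cost randomness smoothing the marginals of many deterministic agents, not concentration of $|S|$ under iid mixing. Finally, your discussion of certifying that $f$ is non-XOS is unnecessary: the proposition only requires $f$ to be subadditive, and non-XOS then \emph{follows} from Theorem~\ref{thm:constant-gap} rather than needing a separate argument.
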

\begin{proof}
Consider a setting with $2n+2$ agents denoted by $A=\{x,y\} \cup [2n]$.
We define the reward function $f: 2^A \rightarrow \reals_{\ge 0}$ in Table~\ref{tab:subadditive-function}. The costs of agents $x$ and $y$  are $0$, and the costs of all remaining agents are $c_i = \frac{2}{3n}$. 
\begin{table}[ht]
\centering
\renewcommand{\arraystretch}{1.5} 
  \begin{tabular}{|c|c|c|c|c|c|}
    \hline
    \multirow{2}{*}{\diagbox{\centering $|S\cap \{x,y\}|$}{\centering $|S\cap [2n]|$}} 
    & \multirow{2}{*}{0} 
    & \multirow{2}{*}{$0 < i < 2n-1$} 
    & \multicolumn{2}{|c|}{$2n-1$} 
    & \multirow{2}{*}{$2n$} \\ \cline{4-5}
    & & & $[n]\subseteq S$ & $[n]\not\subseteq S$ & \\ \hline
    0 & $0$ & $2+\frac{i}{\sqrt{n}}$ & $2+\frac{2n-1}{\sqrt{n}}$ & $3+\frac{2n-1}{\sqrt{n}}$ & $4+\frac{2n-1}{\sqrt{n}}$ \\ \hline
    1 & $4$ & $4+\frac{i}{\sqrt{n}}$ & $5+\frac{2n-1}{\sqrt{n}}$ & $4+\frac{2n-1}{\sqrt{n}}$ & $6+\frac{2n-1}{\sqrt{n}}$ \\ \hline
    2 & $5$ & $5+\frac{i}{\sqrt{n}}$ & $5+\frac{2n-1}{\sqrt{n}}$ & $6+\frac{2n-1}{\sqrt{n}}$ & $7+\frac{2n-1}{\sqrt{n}}$ \\ \hline
  \end{tabular}
\caption{Definition of the monotone subadditive function $f:2^{[2n]\cup\{x,y\}} \rightarrow \mathbb{R}_{\geq 0}$. The value of a set $S$ is determined by $|S\cap\{x,y\}|$ and $|S\cap [2n]|$, and in the case $|S\cap[2n]|=2n-1$, by whether the missing element belongs to $[n]$.}
\label{tab:subadditive-function}
\end{table}

\begin{claim}
The function    $f$ is monotone and subadditive.\label{cl:sub-mon}
\end{claim}
\begin{proof}
We first prove that the function $f$ is monotone. Afterwards, we show that it is subadditive.

\paragraph{Monotonicity.} To see that $f$ is monotone, we can observe that all of the marginals (presented in Tables~\ref{tab:margx} and \ref{tab:margi}) are non-negative. 

\begin{table}[ht]
\centering
\renewcommand{\arraystretch}{1.5} 
  \begin{tabular}{|c|c|c|c|c|c|}
    \hline
    \multirow{2}{*}{\diagbox{\centering $|S\cap \{x,y\}|$}{\centering $|S\cap [2n]|$}} 
    & \multirow{2}{*}{0} 
    & \multirow{2}{*}{$0<i<2n-1$} 
    & \multicolumn{2}{|c|}{$2n-1$} 
    & \multirow{2}{*}{$2n$} 
    \\ \cline{4-5}
    & & & $[n] \subseteq S$ & $[n] \not\subseteq S$ & \\ \hline
    0 & $4$ & $2$ & $3$ & $1$& $2$\\ \hline
    1 & $1$   &       $1$             &$0$ & $2$& $1$ \\ \hline
  \end{tabular}%
\caption{The marginal $f(j\mid S)$ for $j\in \{x,y\} $.\label{tab:margx}}
\end{table}

\begin{table}[ht]
\centering
\renewcommand{\arraystretch}{1.5} 
  \begin{tabular}{|c|c|c|c|c|c|c|}
    \hline
    \multirow{2}{*}{\diagbox{\centering $|S\cap \{x,y\}|$}{\centering $|S\cap [2n]|$}} 
    & \multirow{2}{*}{0} 
    & \multirow{2}{*}{$0<i < 2n-2$} 
    & \multicolumn{2}{|c|}{$2n-2$} 
    & \multicolumn{2}{|c|}{$2n-1$} 
    \\ \cline{4-7}
    & & & $[n] \setminus \{j\} \subseteq S$ & $[n]\setminus\{j\} \not\subseteq S$ & $j\in [n] $ & $ j\notin [n]$ \\ \hline
    0 & $2+\frac{1}{\sqrt{n}}$ & $\frac{1}{\sqrt{n}}$ & $\frac{1}{\sqrt{n}}$ & $1+\frac{1}{\sqrt{n}}$& $1 $ & 2\\ \hline
    1 & $\frac{1}{\sqrt{n}}$   &       $\frac{1}{\sqrt{n}}$             &$1+\frac{1}{\sqrt{n}}$ & $\frac{1}{\sqrt{n}}$& $2 $ & 1 \\ \hline
    2 & $\frac{1}{\sqrt{n}}$   &   $\frac{1}{\sqrt{n}}$                 & $\frac{1}{\sqrt{n}}$& $1+\frac{1}{\sqrt{n}}$ & $1$  & 2 \\ \hline
  \end{tabular}%
\caption{The marginal $f(j\mid S)$ for $j\in [2n]$.  \label{tab:margi}}
\end{table}

\paragraph{Subadditivity.} To see that $f$ is subadditive, observe that for every $T \neq \emptyset$, it holds that \begin{equation}
        \label{eq:lowerf} f(T)\geq 2 +2\cdot\indicator{|\{x,y\} \cap T | \geq 1 } +\indicator{|\{x,y\} \cap T | \geq 2 }+ \frac{|T\setminus\{x,y\}|}{\sqrt{n}}, \end{equation}
    on the other hand \begin{equation}
        \label{eq:upperf}f(T)\leq 4 +2\cdot\indicator{|\{x,y\} \cap T|\geq 1 } +\indicator{|\{x,y\} \cap T | \geq 2 }+ \frac{|T\setminus\{x,y\}|}{\sqrt{n}}. \end{equation}
    Thus, \begin{eqnarray*}
 f(S) + f(T) &  \geq & 2+2+2\cdot\indicator{|\{x,y\} \cap S|\geq 1 } +2\cdot\indicator{|\{x,y\} \cap T | \geq 1 } \\ & + & \indicator{|\{x,y\} \cap S|\geq 2 } +\indicator{|\{x,y\} \cap T | \geq 2 }+\frac{|S\setminus\{x,y\}|}{\sqrt{n}} +\frac{|T\setminus\{x,y\}|}{\sqrt{n}}
 \\ & \geq & 4+ 2\cdot\indicator{|\{x,y\} \cap (S\cup T)|\geq 1 } + \indicator{|\{x,y\} \cap (S\cup T)|\geq 2 } +  \frac{|S\cup T\setminus\{x,y\}|}{\sqrt{n}} \\
 \\ & \geq & f(S\cup T),
     \end{eqnarray*}
   where the first inequality is by Inequality~\eqref{eq:lowerf}, the second inequality is by subadditivity of the indicator function, and since $\frac{|S\cap [2n]|}{\sqrt{n}}$ is an additive function, and the last inequality is by Inequality~\eqref{eq:upperf}  
     This concludes the proof of the claim.
\end{proof}

\begin{claim}
    No PNE achieves a principal utility of more than $6.5$.\label{cl:sub-pne}
\end{claim}
\begin{proof}
    Let $S$ be some PNE. If $ |S\cap [2n]| \leq 1$ then the utility of the principal is bounded by the reward which is bounded by $5 +\frac{1}{\sqrt{n}}$. If $1<|S\cap [2n]| < 2n-1$ then the marginal of all agents in $S\cap [2n]$ are $\frac{1}{\sqrt{n}}$, which means that the principal's utility is bounded by $f(S) (1-|S\cap [2n]|\cdot \frac{c_i}{\sqrt{n}})$, which is negative for $|S\cap[2n]| >\frac{3\sqrt{n}}{2}$, thus, the utility of principal is bounded by $5+\frac{\nicefrac{3\sqrt{n}}{2}}{\sqrt{n}} =6.5$. For $S$ with $|S\cap[2n]|=2n-1$, the marginals of all the $2n-1$ agents in $|S\cap[2n]$ is bounded by $1+\frac{1}{\sqrt{n}}$, thus the fraction that remains with the principal is $1-(2n-1)\frac{\nicefrac{2}{3n}}{1+\frac{1}{\sqrt{n}}} <0 , $ which means that the principal obtains a negative utility from incentivizing this set. If $|S\cap[2n]|=2n$, then half of the agents (in $[2n]$ have a marginal of 1, and the remaining half have marginals of $2$. Thus, the fraction of the reward that remains with the principal is $ 1- n\frac{\nicefrac{2}{3n}}{1} -n\frac{\nicefrac{2}{3n}}{2} =0 $, which means that the principal cannot obtain positive utility from incentivizing this set.
\end{proof}

\begin{claim}
    There exists a MNE that achieves a principal utility of  $\Omega(\sqrt{n})$.\label{cl:sub-mne}
\end{claim}
\begin{proof}
    Consider the MNE where all agents in $[2n]$ take action with a probability of $1$, and agents $x,y$ take action (each) with a probability of $\frac{1}{2}$.
    Since the costs of $x,y$ are zero, we could use $\alpha_x=\alpha_y=0$. For the remaining agents, the expected marginal contribution of them to the reward is $1.5$ since with probability $\frac{1}{2}$ their marginal contribution is 1, and with probability half their marginal contribution is $2$. Thus, it is sufficient to use contract $\frac{c_i}{1.5} = \frac{4}{9n}$ to incentivize them. Overall, the utility of the principal from this MNE is $E[f(S)](1-2n\cdot\frac{4}{9n}) = (\frac{1}{4}\cdot 4+ \frac{1}{2}\cdot 6 + \frac{1}{4}\cdot 7 + \frac{2n-1}{\sqrt{n}})\cdot \frac{1}{9} = \Omega(\sqrt{n}) $, which concludes the proof.
\end{proof}
The proof of the proposition follows from Claims~\ref{cl:sub-mon}, \ref{cl:sub-pne}, and \ref{cl:sub-mne}.
\end{proof}

We next show that the gap between the principal's utility under the best CCE and the best PNE is at most polynomial in the number of agents.

\begin{proposition}
    \label{prop:subadditive-upper-bound}
The gap between the principal's utility
under the best CCE and under the best PNE is at most $O(n)$. 
\end{proposition}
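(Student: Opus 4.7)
The plan is, given any CCE $(\contract^\star,\dist^\star)$ with principal utility $U^\star=(1-\sum_j\alpha^\star_j)\,\E_{S^\star\sim\dist^\star}[f(S^\star)]$, to construct a single-agent contract $\contract$ and a pure Nash equilibrium of it whose principal utility is $\Omega(U^\star/n)$. The backbone will be a subadditive pigeonhole that identifies a ``significant'' agent $i^\dagger$, followed by a scaled single-agent contract tailored to $i^\dagger$.

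The first step is a simple identity. Combining subadditivity, which gives $\sum_i\E[f(S^\star_i)]\ge\E[f(S^\star)]$, with the monotonicity bound $\E[f(S^\star_i)]\le\E[f(S^\star)]$, one obtains
\[
\sum_{i=1}^n (1-\alpha^\star_i)\,\E[f(S^\star_i)] \;\ge\; \E[f(S^\star)]-\E[f(S^\star)]\sum_i \alpha^\star_i \;=\; U^\star,
\]
so by averaging there is an agent $i^\dagger$ with $(1-\alpha^\star_{i^\dagger})\,\E[f(S^\star_{i^\dagger})]\ge U^\star/n$. In parallel, since $\dist^\star$ is a CCE it is dropout-stable (Observation~\ref{obs:cce-stable}), and combining this with subadditivity (which gives $f(S^\star)-f(S^\star_{-i})\le f(S^\star_i)$) yields the cost bound $\E[c(S^\star_{i^\dagger})]\le\alpha^\star_{i^\dagger}\,\E[f(S^\star_{i^\dagger})]$.

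Next I would consider the single-agent contract $\contract$ with $\alpha_{i^\dagger}=\alpha\in(\alpha^\star_{i^\dagger},1)$ and $\alpha_j=0$ for $j\ne i^\dagger$. In any PNE of $\contract$, the remaining agents take no action while $i^\dagger$ plays a best response $\hat S\in\arg\max_{T\subseteq A_{i^\dagger}} \alpha f(T)-c(T)$; comparing $\hat S$ to $S^\star_{i^\dagger}$ pointwise, taking expectation, and invoking the cost bound shows $\alpha\,f(\hat S)\ge(\alpha-\alpha^\star_{i^\dagger})\,\E[f(S^\star_{i^\dagger})]$. Choosing $\alpha=(1+\alpha^\star_{i^\dagger})/2$ and substituting the averaging bound, the principal's utility becomes at least $\frac{1-\alpha^\star_{i^\dagger}}{2n(1+\alpha^\star_{i^\dagger})}\,U^\star$, which is $\Omega(U^\star/n)$ whenever $\alpha^\star_{i^\dagger}$ is bounded away from $1$ (e.g., $\alpha^\star_{i^\dagger}\le 1/2$).

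The main obstacle is the regime where the pigeonhole-maximizer $i^\dagger$ has $\alpha^\star_{i^\dagger}$ close to $1$, in which case the factor $(1-\alpha^\star_{i^\dagger})/(1+\alpha^\star_{i^\dagger})$ degenerates. Because $\sum_i\alpha^\star_i\le 1$, at most one agent $i'$ can satisfy $\alpha^\star_{i'}>1/2$, which enables a case split. If at least half of the sum $\sum_i(1-\alpha^\star_i)\,\E[f(S^\star_i)]\ge U^\star$ is contributed by agents with $\alpha^\star_i\le 1/2$, I apply the construction above to the best such agent and conclude; otherwise the heavy agent $i'$ alone accounts for most of $U^\star$, in which case $U^\star\le(1-\alpha^\star_{i'})\,\E[f(S^\star)]$ is itself already small and a careful choice of $\alpha$ close to $1$ on the single-agent contract for $i'$ (together with the fact that $\E[f(S^\star_{i'})]$ must then dominate $\E[f(S^\star)]$ up to constants) can still extract the desired $\Omega(U^\star/n)$ fraction.
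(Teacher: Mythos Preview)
Your pigeonhole identity $\sum_i(1-\alpha^\star_i)\,\E[f(S^\star_i)]\ge U^\star$ is correct and clean, the single-agent contract construction is fine, and the light-agent case ($\alpha^\star_{i^\dagger}\le 1/2$) goes through as written.

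The heavy-agent case, however, has a real gap. Your one-sided best-response bound yields principal utility at least $\frac{(1-\alpha)(\alpha-\alpha^\star_{i'})}{\alpha}\,\E[f(S^\star_{i'})]$; over $\alpha\in(\alpha^\star_{i'},1)$ this is maximized at $\alpha=\sqrt{\alpha^\star_{i'}}$ with value $(1-\sqrt{\alpha^\star_{i'}})^2\,\E[f(S^\star_{i'})]$. Compared to $(1-\alpha^\star_{i'})\,\E[f(S^\star_{i'})]\ge U^\star/2$, you therefore recover only a $\tfrac{1-\sqrt{\alpha^\star_{i'}}}{1+\sqrt{\alpha^\star_{i'}}}$ fraction, which vanishes as $\alpha^\star_{i'}\to 1$; pushing $\alpha$ toward $1$ makes it worse, not better. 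Your second hint---that $\E[f(S^\star_{i'})]$ must dominate $\E[f(S^\star)]$---is also unjustified: the hypothesis $(1-\alpha^\star_{i'})\,\E[f(S^\star_{i'})]\ge U^\star/2$ is perfectly compatible with $\E[f(S^\star_{i'})]\ll\E[f(S^\star)]$ when $\sum_{j\ne i'}\alpha^\star_j$ absorbs most of $1-\alpha^\star_{i'}$. Concretely: two agents, $\alpha^\star_1=1-\epsilon$, $\alpha^\star_2=0$, $\E[f(S^\star_1)]=\E[f(S^\star)]=1$, $\E[f(S^\star_2)]=0$; then $U^\star=\epsilon$ but your bound certifies only $\Theta(\epsilon^2)$, even though the true optimal single-agent PNE attains $\epsilon/2$.

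The missing ingredient is the \emph{reverse} best-response comparison: apply the CCE condition for agent $i'$ under $\contract^\star$ to the fixed deviation $\hat S$. Added to your inequality, the cost terms $c(\hat S)$ and $\E[c(S^\star_{i'})]$ cancel and you obtain $(\alpha-\alpha^\star_{i'})\,f(\hat S)\ge(\alpha-\alpha^\star_{i'})\,\E[f(S^\star_{i'})]-\alpha^\star_{i'}\,\E[f(S^\star_{-i'})]$, i.e.\ $f(\hat S)\ge\E[f(S^\star_{i'})]-\tfrac{2\alpha^\star_{i'}}{1-\alpha^\star_{i'}}\,\E[f(S^\star_{-i'})]$ at $\alpha=\tfrac{1+\alpha^\star_{i'}}{2}$. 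This is exactly the paper's Case~A bound, and it gives a \emph{constant} fraction of $(1-\alpha^\star_{i'})\,\E[f(S^\star_{i'})]$ provided $\E[f(S^\star_{-i'})]\lesssim(1-\alpha^\star_{i'})\,\E[f(S^\star_{i'})]$. When that fails, $\E[f(S^\star_{-i'})]\gtrsim U^\star$, and subadditivity among the light agents $j\ne i'$ (all with $\alpha^\star_j\le 1/2$) produces one with $(1-\alpha^\star_j)\,\E[f(S^\star_j)]\ge\Omega(U^\star/n)$, to which your original argument applies. So your case split needs one more layer, and the two-sided inequality is essential.
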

The proof of Proposition~\ref{prop:subadditive-upper-bound} is deferred to Appendix~\ref{app:proof-subadditive}.

    \begin{remark}
    We note that \citet{DEFK25} show that the 
    principal's utility under the best PNE is at least a $\frac{1}{m}$-fraction of the optimal welfare (which trivially bounds the principal's utility under the best CCE). This implies an upper bound of $m$ on the gap between the utility under the best CCE and the best PNE for subadditive reward functions. Proposition~\ref{prop:subadditive-upper-bound} establishes a stronger upper bound that does not depend on the number of actions, but only on the number of agents.
    \end{remark}

\section{Supermodular Rewards}

In this section, we establish our results for supermodular rewards. We first show that with binary actions, there is no gap between the principal's utility in the best coarse-correlated equilibrium and the best pure Nash equilibrium. 

\begin{theorem}[Supermodular Rewards, Binary Actions]\label{thm:no-gap-super-binary} 
There exists no gap between the principal's utility 
under the best CCE and the best PNE in binary-action settings.
\end{theorem}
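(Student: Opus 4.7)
Fix an arbitrary contract $\contract^\star$ and an arbitrary CCE $\dist^\star$ of $\contract^\star$. The plan is to construct a modified contract $\contract$ and exhibit a PNE of $\contract$ whose principal utility is at least $(1-\sum_i \alpha_i^\star)\E_{S\sim\dist^\star}[f(S)]$. Define the \emph{support union} $T := \bigcup_{S \in \mathrm{supp}(\dist^\star)} S$, the set of all agents that take action with positive probability under $\dist^\star$. Then set $\alpha_i = \alpha^\star_i$ for every $i \in T$ and $\alpha_i = 0$ for every $i \notin T$. The key claim is that $T$ itself is a PNE of $\contract$.

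For $i \notin T$ the PNE condition is immediate: since $\alpha_i = 0$, the deviation $S_i = \{i\}$ yields utility $-c_i \leq 0$, which is no better than $S_i = \emptyset$. For $i \in T$, the dropout deviation $T_i = \emptyset$ in the CCE condition applied to $\dist^\star$ yields
\[
\alpha^\star_i \cdot \E_{S \sim \dist^\star}\!\left[f(i \mid S_{-i}) \cdot \indicator{i \in S}\right] \;\geq\; c_i \cdot \Pr_{S \sim \dist^\star}[i \in S].
\]
Since $i \in T$ we have $\Pr[i \in S] > 0$, so dividing gives $\alpha^\star_i \cdot \E_{S\sim\dist^\star}[f(i \mid S_{-i}) \mid i \in S] \geq c_i$. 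Every $S$ in the support of $\dist^\star$ satisfies $S_{-i} \subseteq T \setminus \{i\}$, and therefore by \emph{supermodularity} of $f$, $f(i \mid S_{-i}) \leq f(i \mid T \setminus \{i\})$. Combining, $\alpha^\star_i \cdot f(i \mid T \setminus \{i\}) \geq c_i$, which is exactly the PNE condition that agent $i$ (weakly) prefers $S_i = \{i\}$ over $S_i = \emptyset$ when others play $T \setminus \{i\}$. This is the heart of the argument.

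To compare utilities, note that $\sum_{i \in T} \alpha^\star_i \leq \sum_i \alpha^\star_i$ and, by monotonicity, $f(T) \geq f(S)$ for every $S$ in the support, so $f(T) \geq \E_{S \sim \dist^\star}[f(S)]$. When $1 - \sum_i \alpha^\star_i \geq 0$, multiplying these inequalities gives $(1-\sum_{i\in T}\alpha_i^\star) f(T) \geq (1-\sum_i \alpha_i^\star)\E_{S\sim\dist^\star}[f(S)]$, as desired; the case $1 - \sum_i \alpha^\star_i < 0$ is trivial since then the CCE principal utility is strictly negative and is dominated by the empty contract together with its PNE $\emptyset$ of utility $0$.

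The main obstacle---and the source of the supermodular/binary restriction---is the second paragraph. Supermodularity is precisely what lets us lift $f(i \mid S_{-i})$ from random baselines $S_{-i}$ up to the maximal baseline $T \setminus \{i\}$; for submodular or subadditive rewards the inequality reverses, which is exactly why no such clean statement holds there. A second essential ingredient is the zeroing-out of shares for agents outside $T$: even in small binary-action supermodular examples, keeping $\alpha_i^\star > 0$ for $i \notin T$ can create profitable deviations at the profile $T$, so the modification of $\contract^\star$ is not just a convenience but a necessity.
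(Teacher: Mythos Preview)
Your proof is correct and follows essentially the same approach as the paper: take the union $T$ of the CCE support, zero out the shares of agents outside $T$, and use the CCE dropout inequality together with supermodularity to verify that $T$ is a PNE of the modified contract. You are slightly more explicit than the paper in spelling out the utility comparison (monotonicity gives $f(T)\ge \E[f(S)]$, and zeroing shares only helps the principal), including the trivial case $1-\sum_i\alpha_i^\star<0$, but the core argument is identical.
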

\begin{proof} 
Consider a contract $\contract$, and a CCE $\dist$. Let $S'$ be the union of the sets of agents that are in the support of $\dist$. 
Let $\contract'$ be the contract for which $\alpha_i' = 0$ for $i \not\in S'$, and $\alpha_i'=\alpha_i$ for $i\in S'$.
We prove that $S'$ is a pure Nash equilibrium with respect to $\contract'$.

In the remainder of the proof, all expectations (and probabilities) are over $S$, which is distributed according to $ \dist$.
As $\dist$ is a coarse-correlated equilibrium, no agent can improve their utility by unilaterally not working anymore. That is, for all $i \in S'$, we have
    $$\E[\alpha_i \cdot f(S) - c_i \indicator{i \in S}] \geq \E[\alpha_i \cdot f(S \setminus \{i\})],$$ or equivalently $$\E[\alpha_i \cdot (f(S) - f(S \setminus \{i\})) - c_i \indicator{i \in S}] \geq 0.$$ For every set $S$ in the support of $\dist$ and every agent $i\in A$ we have
    \begin{align*}
         f(i\mid S \setminus \{i\}) = (f(S) - f(S \setminus \{i\}))\cdot \indicator{i \in S} \leq  f(i \mid S' \setminus \{i\}) \cdot\indicator{i \in S},
    \end{align*}
    where the last inequality is since $f$ is supermodular, and $S\subseteq S'$.
    In combination, this means that
    \[
    \Pr{i \in S} ( \alpha_i \cdot f(i\mid S' \setminus \{i\}) - c_i ) = \E[\alpha_i \cdot  f(i \mid S' \setminus \{i\}) \cdot \indicator{i \in S} - c_i \cdot \indicator{i \in S}] \geq 0.
    \]
    So, if $i \in S'$, then $\alpha_i \cdot f(i \mid S' \setminus \{i\}) - c_i \geq 0$, meaning that $\alpha_i f(S') - c_i \geq \alpha_i f(S' \setminus \{i\})$, and therefore $\alpha_i' f(S') - c_i \geq \alpha_i' f(S' \setminus \{i\})$, which means that agent $i$ does not want to deviate. For $i\not\in S'$, it holds that $\alpha_i' \cdot f(i\mid S') =0\leq c_i$, which means that agent $i$ does not want to deviate. Overall, we deduce that $S'$ is a pure Nash equilibrium with respect to contract $\contract'$.
\end{proof}

We next consider the general (non-binary) setting for which we show that there is no gap between CE and PNE.

\begin{theorem}[Supermodular Rewards, General Actions]\label{thm:no-gap-super-multi} 
There exists no gap between the utility of the principal under the best CE and the best PNE.
\end{theorem}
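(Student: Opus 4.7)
My plan is, given any CE $\mathcal{D}$ of a contract $\contract$, to exhibit a PNE $S'$ of \emph{the same} contract $\contract$ with $S' \supseteq S^* := \bigcup_{S \in \mathrm{supp}(\mathcal{D})} S$. Monotonicity of $f$ would then yield $u_P(S',\contract) = (1 - \sum_i \alpha_i)\, f(S') \geq (1 - \sum_i \alpha_i)\, f(S^*) \geq (1 - \sum_i \alpha_i)\, \mathbf{E}_{S \sim \mathcal{D}}[f(S)] = u_P(\mathcal{D}, \contract)$, closing the gap. The structural fact I would rely on is that, since $f$ is supermodular, each agent's utility $u_i(T_i, S_{-i}) = \alpha_i f(S_{-i} \cup T_i) - c(T_i)$ has increasing differences in $(T_i, S_{-i})$; by Topkis, the $\subseteq$-greatest best response $B_i^{\max}(S_{-i})$ of agent $i$ to $S_{-i}$ is well-defined and monotone non-decreasing in $S_{-i}$.

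I would construct $S'$ in two steps. First, I would prove the key claim: for every agent $i$, writing $A_i^* := S^* \cap A_i$, we have $A_i^* \subseteq B_i^{\max}(S^*_{-i})$. Granted this, starting from $S^0 := S^*$ and iterating $S^{t+1}_i := B_i^{\max}(S^t_{-i})$ produces a monotone non-decreasing sequence in the finite lattice $2^A$ (the key claim gives $S^1 \supseteq S^0$, and Topkis monotonicity propagates the inclusion). This sequence stabilizes at a fixed point $S^\infty \supseteq S^*$ which is a PNE of $\contract$, and I set $S' := S^\infty$.

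To prove the key claim, I would fix $i$, let $T_i^* := B_i^{\max}(S^*_{-i})$, and set $X := A_i^* \setminus T_i^*$. It suffices to establish $\alpha_i f(X \mid S^*_{-i} \cup T_i^*) \geq c(X)$, since then $T_i^* \cup A_i^*$ is itself a best response to $S^*_{-i}$, and maximality of $T_i^*$ forces $A_i^* \subseteq T_i^*$. Enumerating the support of $\mathcal{D}$'s marginal on $A_i$ as $S_i^{(1)},\ldots,S_i^{(K)}$ and setting $X^{(k)} := S_i^{(k)} \setminus T_i^*$, I would build a disjoint refinement $V_k := X^{(k)} \setminus (V_1 \cup \cdots \cup V_{k-1})$, so that $X$ is the disjoint union of the $V_k$ and $c(X) = \sum_k c(V_k)$. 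For each $k$, the CE deviation $\pi_i(S_i^{(k)}) := S_i^{(k)} \setminus V_k$ yields
\[
\alpha_i \, \mathbf{E}_{\mu_{-i}^{(k)}} \left[ f(V_k \mid S_{-i} \cup (S_i^{(k)} \setminus V_k)) \right] \geq c(V_k),
\]
where $\mu_{-i}^{(k)}$ is the conditional distribution of $S_{-i}$ given $S_i = S_i^{(k)}$. Since $S_{-i} \cup (S_i^{(k)} \setminus V_k) \subseteq S^*_{-i} \cup T_i^* \cup V_1 \cup \cdots \cup V_{k-1}$ and $V_k$ is disjoint from this superset, supermodularity lifts the bound pointwise to $\alpha_i f(V_k \mid S^*_{-i} \cup T_i^* \cup V_1 \cup \cdots \cup V_{k-1}) \geq c(V_k)$; summing telescopically in $k$ recovers $\alpha_i f(X \mid S^*_{-i} \cup T_i^*) \geq c(X)$.

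The main obstacle will be exactly this transfer: CE's per-recommendation inequalities bound marginals on the stochastic context $S_{-i} \cup (S_i^{(k)} \setminus V_k)$, while the key claim requires a deterministic bound at the larger context $S^*_{-i} \cup T_i^* \cup V_1 \cup \cdots \cup V_{k-1}$. Supermodularity is what makes this transfer possible, and the inductive disjointification is what handles overlaps across different recommendations. This is also the point where CE is essential rather than CCE: a CE yields a distinct deviation constraint per recommendation $S_i^{(k)}$, whereas a CCE only supplies an unconditional deviation and can place mass on actions that are never best responses---precisely the phenomenon behind the unbounded CCE/PNE gap of Proposition~\ref{prop:supermodular-gap-for-cce}.
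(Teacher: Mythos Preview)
Your proposal is correct and follows essentially the same approach as the paper: both arguments use the CE per-recommendation constraint with deviation $S_i^{(k)}\mapsto S_i^{(k)}\cap R_i^{k-1}$ (your $S_i^{(k)}\setminus V_k$ is exactly this set), lift the resulting marginal bound via supermodularity, telescope over $k$ to show the union of recommended actions can be absorbed into a best response, and then run best-response dynamics from the union to reach a PNE containing it. The only cosmetic difference is the convergence step---the paper uses arbitrary improving dynamics terminated by the potential function, while you use Topkis monotonicity of the maximal best-response and a Tarski-style monotone iteration---but both are standard and equivalent here.
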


\begin{proof}
Consider a contract $\contract$ and a correlated equilibrium $\dist$ with respect to this contract.
Let $T=\cup_{S\in \sup(\dist)} S$. 
Before proving the theorem we are first going to prove the following key lemma.
\begin{lemma}\label{lem:local}
    For every agent $i$, assuming that every agent $j\neq i$ selects a set of actions $S_j$ such that $T_j \subseteq S_j $ (where we denote by $S_{-i} =\cup_{j \neq i} S_j$), then there exists a set $S_i$ for which  $T_i \subseteq S_i $ that is agent $i$'s best response. I.e., $$ S_i \in \arg\max_{X_i \subseteq A_i} \alpha_i f(X_i \cup S_{-i}) -c(X_i).$$
\end{lemma}
\begin{proof}
    Let $S_i^1,\ldots,S_i^k$ be the support of the sets taken by agent $i$ according to $\dist$ (ordered arbitrarily).
    Consider an arbitrary set  $R_i$ in $ \arg\max_{X_i \subseteq A_i} \alpha_i f(X_i \cup S_{-i}) -c(X_i)$.
    For $j=0,\ldots,k$, let $R_i^j = R_i \cup \bigcup_{\ell =1}^j S_i^\ell $ (where $R_i^0=R_i$).
    We denote by $\dist_{-i}(S_i^j)$ the distribution of the set of actions suggested to agents $N \setminus \{i\}$, conditioned on agent $i$ suggested the action set $S_i^j$.
    By that $\dist$ is a correlated equilibrium, we have that $$\alpha_i\cdot \E_{X_{-i} \sim \dist_{-i} (S_i^j)} [f(S_i^j \cup X_{-i})] -c(S_{i}^j) \geq \alpha_i \cdot  \E_{X_{-i} \sim \dist_{-i} (S_i^j)} [f((S_i^j \cap  R_i^{j-1}) \cup X_{-i})] -c(S_{i}^j\cap R_i^{j-1}) , $$
    or by rearranging, and since $S_i^j\setminus R_i^{j-1} =R_i^j\setminus R_i^{j-1}$ we get that
    $$ \alpha_i \cdot \E_{X_{-i} \sim \dist_{-i} (S_i^j)} [f(R_i^j\setminus R_i^{j-1} \mid (S_i^j \cap R_i^{j-1})\cup X_{-i})] \geq c(R_{i}^j \setminus R_i^{j-1}). $$
    Now, since (1) $X_{-i} \subseteq S_{-i}$ for each realization in the support of $\dist_{-i}(S_i^j)$, (2) $S_i^j \cap R_i^{j-1} \subseteq R_i^{j-1}$,  (3) $ R_i^j\setminus R_i^{j-1}$ is disjoint from $R_{i}^{j-1}\cup S_{-i}$, and (4) $f$ is supermodular, we deduce that 
    $$ \alpha_i \cdot \E_{X_{-i} \sim \dist_{-i} (S_i^j)} [f(R_i^j\setminus R_i^{j-1} \mid  R_i^{j-1}\cup S_{-i})] \geq c(R_{i}^j \setminus R_i^{j-1}).$$
    Since the last expression is deterministic (as it does not depend on the realization of $X_{-i}$), we conclude that 
    $$\alpha_i \cdot f(R_i^j\setminus R_i^{j-1} \mid  R_i^{j-1}\cup S_{-i}) \geq c(R_{i}^j \setminus R_i^{j-1}).$$
    By summing over $j\in [k]$ we obtain that 
        $$\alpha_i \cdot ( f(R_i^k \cup S_{-i}) - f(R_i^0 \cup S_{-i}) )   \geq c(R_{i}^k \setminus R_i^{0}),$$
        which by rearrangement we conclude that 
        $$\alpha_i \cdot  f(R_i^k \cup S_{-i}) - c(R_i^k) \geq \alpha_i \cdot f(R_i^0 \cup S_{-i})   - c( R_i^{0}).$$
        Since $R_i^0 = R_i$, and $R_i \in \arg\max_{X_i \subseteq A_i} \alpha_i f(X_i \cup S_{-i}) -c(X_i)$ 
        this implies that $$R_i^k  \in \arg\max_{X_i \subseteq A_i} \alpha_i f(X_i \cup S_{-i}) -c(X_i).$$
        This concludes the proof of the lemma since $T_i = \cup_j S_i^j \subseteq  R_i^k$.       
   \end{proof}
We are now ready to prove the theorem.
For this, we observe that if we start with $T$, and as long as some agent $i$ can strictly improve her utility, she improves her utility to a set containing $T_i$ (which is without loss because of Lemma~\ref{lem:local}), this process maintains the property that the action profile always contains $T$. Moreover, the process must terminate after a finite number of steps, as at each step, the value of the potential function (see Equation~\eqref{eq:potential}) 
 strictly increases, and this is a finite function.  
\end{proof}

We next prove that Theorem~\ref{thm:no-gap-super-multi} is tight by presenting an instance with a supermodular reward function,  for which there is an unbounded gap in the principal's utility that can be obtained from a CCE compared to a PNE.

\begin{proposition}[Supermodular Rewards, General Actions]
\label{prop:supermodular-gap-for-cce}
There exists an instance in which the gap between the principal's utilities under the best CCE and under the best PNE is unbounded.
\end{proposition}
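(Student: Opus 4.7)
The plan is to exhibit an explicit parametric family of instances with supermodular rewards in which the principal's utility under the best CCE grows without bound while the best-PNE utility remains bounded; the ratio then tends to infinity as the parameter is scaled up.

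First, I would set up the instance. A small number of agents (as few as two) suffices, with each agent equipped with at least two actions, so that we are in the non-binary regime that the binary-case Theorem~\ref{thm:no-gap-super-binary} explicitly rules out. The reward function $f$ is built as a non-negative linear combination of indicator functions of carefully chosen ``target'' action subsets, making $f$ monotone and supermodular by construction. Target sets, action costs, and a scaling parameter $K$ are chosen so that there is a ``heavy'' action $j$ (owned by some agent $i$) whose marginal contribution is strictly less than $c_j/\alpha_i$ even at the most favorable profile under any candidate contract; by supermodularity (non-decreasing marginals), this inequality then propagates to every subset of that profile, so $j$ cannot belong to the best response of agent $i$ at any profile, and hence cannot appear in any PNE of any contract.

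Second, I would construct a CCE $\dist^\star$ of a carefully chosen contract $\contract^\star$ achieving large expected principal utility. The correlation is arranged so that in some realizations agent $i$ is recommended the heavy action $j$ together with a profile of the other agents that makes the target set containing $j$ active (so that $f$ jumps by roughly $K$), while in other realizations agent $i$ is recommended a light profile whose utility compensates. The CCE inequality is verified one deviation at a time: any fixed $T_i \subseteq A_i$ is evaluated against the full correlated distribution of $S_{-i}$, and because the two recommendation regimes correlate with different $S_{-i}$-patterns, no single unconditional $T_i$ aligns with both regimes and hence no fixed deviation strictly improves agent $i$'s expected utility. Note that $\dist^\star$ is not a CE: conditional on the heavy recommendation, a switching policy would be strictly profitable, and this is precisely where the proof of Lemma~\ref{lem:local} breaks down for CCE.

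Third, I would compare utilities and scale. Because the heavy action cannot appear in any PNE of any contract, the best-PNE principal utility is bounded by the value attainable using only the ``light'' targets—a quantity that does not depend on $K$. In contrast, the CCE $\dist^\star$ places positive probability on profiles in which the heavy-target sets are active, so its expected reward (and the principal's corresponding utility) grows linearly in $K$. Letting $K \to \infty$ yields an unbounded gap, as claimed.

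The main obstacle is the joint calibration of costs, target-set values, and the support of $\dist^\star$ so that simultaneously (i) $f$ is supermodular, (ii) the heavy action's marginal contribution is strictly below its scaled cost at every profile (hence absent from every PNE under every contract), and (iii) the heavy action is nonetheless supportable in a CCE, i.e., every agent's expected utility under $\dist^\star$ weakly dominates every unconditional fixed deviation. The tension is that supermodularity forces marginals to be maximized at the full profile, which simultaneously makes condition (ii) hardest to achieve there and constrains how much excess cost can be absorbed through correlation in (iii); the construction succeeds by using the non-binary action structure to give agent $i$ a cheap alternative that hedges the expected deviation utility, so that no fixed unconditional deviation can match what the correlated recommendation provides on average.
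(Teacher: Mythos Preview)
Your plan correctly identifies the need for non-binary actions and the CCE-versus-CE distinction, and your construction of $f$ as a non-negative combination of threshold indicators $\indicator{T \subseteq S}$ does yield a supermodular reward. However, the parametric-growth mechanism you propose has a quantitative obstruction that does not appear to be resolvable.

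You require the heavy action $j$ to be absent from every PNE under every contract; since $\alpha_i \le 1$, this forces $f(j \mid A \setminus \{j\}) < c_j$, and by supermodularity $\alpha_i f(j \mid S) < c_j$ for all $S$ and all $\alpha_i \le 1$. Hence in every realization where $j$ is recommended, agent $i$ loses at least $(1-\alpha_i)\,c_j$ relative to dropping $j$. If the heavy target scales like $K$ then so must $c_j$, and the CCE constraint against the fixed deviation $T_i = \emptyset$ reads, up to constants, $p\,(1-\alpha_i)\,K \le (1-p)\cdot O(1)$, where $p$ is the probability of the heavy recommendation and the $O(1)$ term is agent $i$'s marginal gain in the light realization (bounded by the light targets, which by your own stipulation do not scale with $K$). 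But the principal's CCE utility is at most $(1-\alpha_i)\bigl(p\,K + O(1)\bigr)$, and the constraint forces $(1-\alpha_i)\,p\,K = O(1)$. Thus the principal's CCE utility is $O(1)$ regardless of $K$, and the claimed linear growth fails. The tension you flag in your last paragraph is therefore not merely a calibration nuisance; it is fatal to the growth mechanism.

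The paper takes a different route that sidesteps this trade-off entirely. It exhibits a single fixed instance (two agents, three actions, no scaling parameter) in which the only action profile with positive welfare is the grand set $\{1,2,3\}$, and incentivizing it as a PNE forces $\alpha_1 + \alpha_2 \ge 1$; hence the best PNE yields the principal exactly $0$. It then verifies that a specific CCE---the grand set with probability $0.8$ and the empty set with probability $0.2$---under a contract with $\alpha_1 + \alpha_2 < 1$ is valid and gives strictly positive principal utility, so the ratio is unbounded. Crucially, no action is excluded from all PNE; the leverage comes from the budget constraint $\sum_i \alpha_i \le 1$, and correlating with the empty profile is precisely what lets the principal undercut that budget while keeping every fixed deviation unprofitable on average.
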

\begin{proof}
Consider an instance with two agents where 
$A_1 = \{1,2\}$, and $A_2=\{3\}$.
Let $f:2^A\rightarrow \reals_{\geq 0}$ be as defined 
as in Table~\ref{tab:f}. The function $f$ is monotone and supermodular.
The costs of the actions are $c(1) = 0.75$, $    c(2) = 8.5$ and $c(3) =0.25$.

\begin{table}[h!]
\centering
\begin{tabular}{|c|c|c|c|c|c|c|c|c|}
\hline
$S$ & $\emptyset$ & $\{1\}$ & $\{2\}$ & $\{3\}$ & $\{1,2\}$ & $\{1,3\}$ & $\{2,3\}$ & $\{1,2,3\}$ \\ \hline
$f(S)$   & 0   & 0   & 1   & 0   & 5.5   & 1   & 1   & 10   \\ \hline
\end{tabular}
\caption{A supermodular reward function $f$ with unbounded gap between the best CCE and the best PNE.\label{tab:f}}
\end{table}

Note that the only set of actions with a positive welfare is $\{1,2,3\}$. Thus, this is the only candidate for a PNE with a positive utility for the principal. To incentivize $\{1,2,3\}$, the principal must use a contract $\contract$ such that $\alpha_1\cdot f(\{1,2\} \mid \{3\}) - c(\{1,2\}) \geq \alpha_1 \cdot f(1\mid \{3\}) -c(1)$ and $\alpha_2 f(3\mid \{1,2\}) \geq c(3)$, thus $\alpha_1\geq 8.5/9 = 17/18 $, and $\alpha_2 \geq 0.25/4.5 = 1/18 $, thus the principal cannot obtain a positive utility.

Consider the contract $\contract_3=(0.925,1/18)$. It holds that the distribution $\dist$ where $\{1,2,3\}$ is suggested with probability $0.8$, and $\emptyset$ is suggested with probability $0.2$ is a CCE.
Indeed this is a CCE since the utility of agent 2 when following the suggestion is $0.8\cdot (\alpha_2\cdot f(\{1,2,3\})-c(3))= 11/45 $, while if agent 2 does nothing, his utility is $0.8(\alpha_2 \cdot f(\{1,2\})) =11/45$, and if he always takes action 3, his utility is $ 0.8\alpha_2 f(\{1,2,3\}) +0.2\alpha_2 f(3)-c(3) =7/36 < 11/45$. For agent 1, following the suggestion of the principal leads to a utility of $0.8 \cdot ( 0.925 \cdot 10 - 9.25) = 0 $.
Now deviating to $\emptyset$ leads to a utility of $0$, deviating to $\{1\}$ leads to a utility of $0.8\cdot 0.925\cdot  1 - 0.75 = -0.01 $, deviating to $\{2\}$ leads to a utility of $ \alpha_1 \cdot 1 -c(2) = -7.575$, and deviating to $\{1,2\}$ leads to a utility of $0.8\cdot 0.925\cdot 10+0.2\cdot 0.925\cdot 5.5 - 9.25 = -0.8325 $. Thus, it is a CCE.

The utility of the principal under this CCE is $(1-0.925-1/18)\cdot 0.8 \cdot 10 = 7/45>0$, which concludes the proof.
\end{proof}

\section{General Rewards}

In this section, we show a lower bound on the gap between the principal's utility from the best MNE and the best PNE for general reward functions (that are neither subadditive, nor supermodular). We show that this gap is unbounded, even for a constant number of agents and binary actions.

\begin{proposition}[General Rewards]\label{prop:general-binary}
There exists a binary-action instance in which the gap between the principal's utility 
under the best MNE and under the best PNE is unbounded.
\end{proposition}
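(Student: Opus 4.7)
The plan is to exhibit a family of 4-agent, binary-action instances parameterized by a scalar $M \to \infty$, with a monotone normalized reward function $f : 2^{[4]} \to \reals_{\geq 0}$ and cost vector $c$ such that (i) $f$ is neither subadditive (witnessed e.g.\ by a pair of singletons $i,j$ with $f(\{i\}) + f(\{j\}) = 0$ while $f(\{i,j\}) = \Theta(M)$) nor supermodular (witnessed by a chain $S \subset T$ along which some agent's marginal contribution strictly decreases); (ii) every PNE of every linear contract gives the principal utility at most a constant $K$ independent of $M$; and (iii) some contract admits an MNE with principal utility $\Omega(M)$. Combining (ii) and (iii) yields an unbounded ratio.

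The construction uses asymmetry in costs: two \emph{expensive} agents with cost $\Theta(M)$ and two \emph{cheap} agents with cost $0$. The reward $f$ is crafted so that any set $S$ with $f(S) = \Theta(M)$ contains at least one expensive agent, but for every such $S$, either (a) some agent in $S$ has zero marginal contribution, ruling $S$ out as a PNE of any contract in which that agent's cost is positive, or (b) the required $\alpha_i$-shares of the expensive agents force $\sum_i \alpha_i \geq 1 - O(1/M)$, leaving only $O(1)$ for the principal. Verifying (ii) is a finite case analysis over the $2^4 = 16$ action sets, showing that one of (a) or (b) holds for every PNE-candidate $S$ under every contract.

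For (iii), the plan is to describe a mixed profile in which the expensive agents play with probability $1$ while the cheap agents mix interiorly, tuned so that the expected marginal contribution of each expensive agent (now averaged over the random presence/absence of the cheap agents) is strictly larger than its marginal in any single pure profile that would support it as a PNE. This slack lets the contract set the $\alpha$-shares of the expensive agents bounded away from $1$ while satisfying all IC conditions, and keeps $\mathbb{E}[f(S)] = \Theta(M)$; the principal's utility $(1 - \sum_i \alpha_i) \mathbb{E}[f(S)]$ is then $\Omega(M)$.

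The main obstacle is step (ii): one must rule out \emph{all} PNEs of \emph{all} contracts simultaneously, since a generic contract could in principle tune itself to a particular candidate set $S$. This is delicate precisely because it is what forces $f$ to exhibit both non-subadditive and non-supermodular behavior — the non-subadditivity creates the large reward at coordinated sets that MNE can exploit, while the non-supermodularity creates the marginal-collapse that blocks PNEs at exactly those sets. Once the construction is pinned down, step (iii) reduces to verifying a finite system of linear IC inequalities.
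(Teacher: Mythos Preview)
Your plan is sound and would prove the statement, but the paper takes a simpler route. Rather than a parameterized family with PNE utility $\leq K$ and MNE utility $\Omega(M)$, the paper exhibits a \emph{single fixed} 4-agent binary-action instance in which every PNE gives the principal utility exactly $0$ while a specific MNE gives strictly positive utility --- so the gap is infinite in one instance, no limit needed. Concretely, the two ``expensive'' agents have cost $1$ (not $\Theta(M)$), the rewards are small constants, and the key numerical trick is to set $f(\{1,2,3\}) = f(\{1,2,4\}) = \phi+1$ where $\phi$ is the golden ratio: this tunes the marginals so that at every high-value pure profile the required shares satisfy $\alpha_1 + \alpha_2 = \tfrac{1}{\phi} + \tfrac{1}{\phi+1} = 1$, leaving the principal nothing. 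In the MNE, agents $1,2$ play with probability $1$ while agents $3,4$ each mix with probability $1-\phi/2$; this symmetrizes the expected marginals of agents $1$ and $2$ to $\tfrac{5\phi}{4}$ each, so the required shares sum to $\tfrac{8}{5\phi} < 1$ and the principal retains a strictly positive fraction of a strictly positive expected reward. Your parameterized approach is essentially a scaled version of the same mechanism and would work, but aiming for PNE $= 0$ exactly via a single algebraic tune (rather than PNE $\leq K$ via asymptotics) eliminates the family and collapses step~(ii) to a short finite check.
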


\begin{proof}
Consider a binary-action instance with four agents, i.e., $A=\{1,2,3,4\}$. 
The reward function is defined by the monotone closure of the following values (i.e., the maximum over all defined subsets). Let
\[
f(\{1,2\}) = 2 , f(\{1,3\}) = f(\{2,4\}) = 1, f(\{1,2,3\}) = f(\{1,2,4\}) = \phi+1,
\]
where $\phi = \frac{1+\sqrt{5}}{2} \approx 1.618$ (i.e., $\phi$ is the golden ratio).
Note that $f$ is  not subadditive (as $f(1)+f(2) <f(\{1,2\}$), nor supermodular (as $f(\{1,2,3\} +f(\{1,2,4\}) \geq f(\{1,2\} +f(\{1,2,3,4\})$).

The costs are
\[
c(1) = c(2) =1, \text{ and } c(3) = c(4)=0.
\]
We next calculate the utility of the principal under a PNE corresponding to all non-redundant sets of agents (sets that do not contain an agent with a $0$ marginal, and non-zero cost).
For this, we utilize the characterization of \cite{BFN06} for the principal's utility function as a function of the PNE for binary instances: $$ g(S)=f(S)(1-\sum_{i \in S} \frac{c(i)}{f(i\mid S\setminus \{i\})}).$$
It holds that: 
$$ g(\{1,2\}) = f(\{1,2\}) (1-\frac{c(1)}{f(1\mid \{2\})} -\frac{c(2)}{f(2\mid \{1\})} )  = 0 $$
$$ g(\{1,3\}) = f(\{1,3\}) (1-\frac{c(1)}{f(1\mid \{3\})}  )  = 0 $$
$$ g(\{1,3,4\}) = f(\{1,3,4\}) (1-\frac{c(1)}{f(1\mid \{3,4\})}  )  = 0 $$
$$ g(\{2,4\}) = f(\{2,4\}) (1-\frac{c(2)}{f(2\mid \{4\})}  )  = 0 $$
$$ g(\{2,3,4\}) = f(\{2,3,4\}) (1-\frac{c(2)}{f(2\mid \{3,4\})}  )  = 0 $$

$$ g(\{1,2,3\}) = f(\{1,2,3\}) (1-\frac{c(1)}{f(1\mid \{2,3\})} -\frac{c(2)}{f(2\mid \{1,3\})}  )  =f(\{1,2,3\}) (1-\frac{1}{\phi+1} -\frac{1}{\phi}  )  = 0 $$

$$ g(\{1,2,4\}) = f(\{1,2,4\}) (1-\frac{c(1)}{f(1\mid \{2,4\})} -\frac{c(2)}{f(2\mid \{1,4\})}  )  =f(\{1,2,4\}) (1-\frac{1}{\phi} -\frac{1}{\phi+1}  )  = 0 $$
$$ g(\{1,2,3,4\}) = f(\{1,2,3,4\}) (1-\frac{c(1)}{f(1\mid \{2,3,4\})} -\frac{c(2)}{f(2\mid \{1,3,4\})}  )  =f(\{1,2,3,4\}) (1-\frac{1}{\phi} -\frac{1}{\phi}  )  <  0 $$

Thus, no PNE obtains a positive utility for the principal.

Consider the contract $\contract = (\frac{4}{5\phi},\frac{4}{5\phi},0,0)$, and the distribution $\dist$ over agents where agents 1,2 always take an action with probability 1, and agents 3 and 4 take an action with probability $1-\frac{\phi}{2}$ each (independently).
The expected marginal of agent 1 (similarly, of agent 2) is:
$(1-\frac{\phi}{2})^2 \cdot \phi + (\frac{\phi}{2})(1-\frac{\phi}{2}) \cdot \phi + (1-\frac{\phi}{2})(\frac{\phi}{2}) (\phi+1) +(\frac{\phi}{2})^2 \cdot 2 = \frac{5\phi}{4}$.
Distribution $\dist$ is a MNE since agent 1 (similarly, agent 2) is paid $\frac{c(1)}{E_{S\sim \dist}[f(1\mid S\setminus \{1\})]} = \frac{4}{5\phi}$.
The principal's utility under this MNE is $$  (1-2\cdot\frac{4}{5\phi}) (\frac{\phi+1}{4}\cdot 2 +(1-\frac{\phi+1}{4})\cdot (\phi+1)) >0 , $$
which concludes the proof.
\end{proof}

\printbibliography

\appendix

\section{Separation Example}

The following example (adopted from \cite[][Example 3.1]{BabaioffFN10}) with two identical agents each having a single action, shows that with submodular rewards the principal can strictly benefit from inducing a mixed Nash equilibrium rather than a pure Nash equilibrium. 
We remark that the reward function in this example is also gross-substitutes and not just submodular as every submodular function over two elements is also gross-substitutes.

\begin{example}[Separation Example]
\label{ex:mixed}
Consider an instance with two identical agents $A=\{1,2\}$ with costs $c(1) = c(2) = c = 1.$ The submodular reward function is such that  $f(\emptyset) = 0$, $f(1) =f(2) = 180$, and $f(\{1,2\}) = 200$.

Recall the definition of the function $g: 2^A \rightarrow \reals$, introduced in the proof of Proposition~\ref{prop:general-binary}. 
Under pure Nash equilibria, the best contracts for the possible action profiles $\emptyset$, $\{1\}$, $\{2\}$, and $\{1,2\}$  yield the principal a utility of
\begin{align*}
&g(\emptyset) = 0, \\
&g(1) = f(1) \cdot \left(1- \frac{c}{f(1)} \right) = 179, \quad\\
&g(2) = f(2) \cdot \left(1- \frac{c}{f(2)} \right) = 179, \quad\text{and}\\
&g(\{1,2\}) = f(\{1,2\}) \cdot \left(1 -\frac{c}{f(1 \mid \{2\})}- \frac{c}{f(2 \mid \{1\})}\right) = 180.
\end{align*}
Thus, the highest utility the principal can obtain in a pure Nash equilibrium is $180$, by inducing both agents to exert effort.

Consider the contract $\contract = (\frac{1}{36},\frac{1}{36})$. We show that the product distribution $\dist$ in which each agent takes an action with probability $0.9$ is a MNE.  Assuming that one agent takes an action with probability $0.9$, then the utility of the other action as a function of the probability it takes an action is denoted by $$ u_a(p) = \frac{1}{36} \left(p  \cdot (0.9 \cdot f(\{1,2\}) + 0.1 \cdot f(1)) + (1-p) \cdot (0.9 \cdot f(1) + 0.1 \cdot f(\emptyset)) \right) -p \cdot c  = 4.5,  
$$
which means that the utility of the agent is independent of the probability of the agent taking action, which means that $\dist $ is indeed an MNE.

The expected principal's utility under contract $\contract$  and distribution $\dist$ is
\[
\E_{S\sim \dist}[u_p(\contract,S)] = (1-\frac{2}{36}) \left(\frac{1}{100} \cdot f(\emptyset) +\frac{9}{100}\cdot f(1) + \frac{9}{100}\cdot f(2) +\frac{81}{100}\cdot  f(\{1,2\}) \right)= \frac{17}{18} \cdot 194.4 = 183.6,
\]
which is strictly higher than the optimal principal utility in a pure Nash equilibrium. 
This shows that the principal can strictly gain from inducing a mixed Nash equilibrium, rather than a pure Nash equilibrium.
\end{example}

\section{Proof of Theorem~\ref{thm:constant-gap-sub-CCE}}
\label{app:constant-gap-sub-CCE}

The high-level idea is to 
distinguish cases
based on the equilibrium $S^\star$ at contract $\contract^\star$, and whether it is sufficient to incentivize only zero cost actions, or
whether there is a ``significant'' agent, namely, an agent such that $\alpha^*_i > 3/4$ and $(1-\alpha_i^\star) \cdot f(S_i^\star) \geq 4 \cdot f(S_{-i}^\star)$, or not. If there is a significant agent, we show that we can get a good approximation under any CCE by aiming to 
incentivize that agent alone. 
If there is no significant agent, 
then either (i) there is an agent with $\alpha^*_i > 3/4$ but $(1-\alpha_i^\star) \cdot f(S_i^\star) \leq 4 \cdot f(S_{-i}^\star)$, or (ii) $\alpha^*_i < 3/4$ for all agents $i$. 
In case (i), we show that 
dropping agent $i$ and applying the Scaling-for-Robustness Lemma to the remaining agents yields a good guarantee under any CCE.
(Note that there can be at most one agent with $\alpha^\star_i > 3/4$ and that $\sum_{i' \neq i}\alpha^\star_{i'} \leq 1/4$.) 
In case (ii), we argue that the 
agents can be partitioned into two groups $B_1, B_2$ such that $\sum_{i' \in B_{\ell}} \alpha^\star_{i'} \leq 3/4$ for $\ell \in \{1,2\}$, and applying the Scaling-for-Robustness Lemma to the better of the groups guarantees a good principal's utility under any CCE.

\begin{algorithm}[t]
\caption{Black-Box Robustness for Submodular Rewards\label{alg:robust-submod}}
   \hspace*{\algorithmicindent} \textbf{Input:}  Costs $c_1,\ldots,c_m \in \reals_{\geq 0}$, value oracle access to a submodular function $f:2^A \rightarrow \reals_{\geq 0}$, a contract $\contract^\star$, and a corresponding PNE $S^\star$. \\
    \hspace*{\algorithmicindent} \textbf{Output:}  A contract $\contract$ guaranteeing a principal's utility of at least $\Omega(1)\cdot(1-\sum_{i}\alpha_i^\star)f(S^\star)$  for any CCE of $\contract$.
\begin{algorithmic}[1]
\State Let $\actions_0 = \{j \in \actions \mid c_j = 0\}$ be the zero cost actions
\If{$f(A_0)\geq \frac{2}{17}(1-\sum_i \alpha_i^\star) f(S^\star) $}
\State \Return $\contract=\vec{\epsilon} $ for $\epsilon = \frac{1}{2n}$
\ElsIf{  $\max_i \alpha^\star_{i} > 3/4 $}
\State Let $i^\star=\arg\max_i \alpha^\star_{i}$ \Comment{There must be a unique maximum}
\If{$(1-\alpha_i^\star) \cdot f(S_i^\star) \geq 4 \cdot f(S_{-i}^\star)$}
\State \Return $\contract$ where $\alpha_{i^\star} = \frac{1+\alpha_{i^\star}^\star}{2} $ and $\alpha_i=0$ for $i\neq i^\star$
\Else
\State \Return $\contract$ where $\alpha_{i^\star} = 0  $ and $\alpha_i=2\alpha_i^\star$ for $i\neq i^\star$
\EndIf
\Else
\State Partition $N$ into two bundles $B_1,B_2$, where for each $\ell \in\{1,2\}$, $\sum_{i\in B_\ell} \alpha_i^\star \leq 3/4$
\State Let $\ell \in \arg\max_{\ell'\in\{1,2\}} f(\bigcup_{i\in B_{\ell'}}S_i^\star)$
\State \Return $\contract$ where $\alpha_{i} = \frac{7}{6}\alpha_i^\star  $ for $i\in B_\ell$ and $\alpha_i=0$ for $i\not\in B_\ell$
\EndIf
\end{algorithmic}
\end{algorithm}

\begin{proof}[Proof of Theorem~\ref{thm:constant-gap-sub-CCE}]
Consider any input $\contract^\star$ and PNE $S^\star$ with respect to $\contract^\star$, given as input to Algorithm~\ref{alg:robust-submod}. We analyze the guarantee provided by the contract computed by this algorithm by distinguishing the following cases:

\bigskip

\noindent \textbf{Case A:} It holds that $f(A_0) \geq \frac{2}{17}(1-\sum_i \alpha_i^\star) f(S^\star)$. In this case, since $A_0$ is a PNE with respect to the zero-contract $\vec{0}$ (a contract that pays nothing to all agents), 
by applying Lemma~\ref{lem:doubling-cce} with $\epsilon=\frac{1}{2n}$ and $\gamma =2$ we obtain that under contract $\contract=\gamma \vec{0} + \vec{\epsilon} = \vec{\epsilon} $ any CCE $\dist$ with respect to $\contract$ satisfies $\E_{S \sim \dist}[f(S)] \geq \frac{1}{2} (1 - \frac{1}{\gamma}) f(A_0) = \frac{1}{4} f(A_0)  $. 
Thus, the principal's utility under this case is at least $$(1-\sum_i \alpha_i)\E_{S \sim \dist}[f(S)] \geq (1-\frac{1}{2})\frac{1}{4} f(A_0) \geq \frac{1}{68}(1-\sum_i \alpha_i^\star)f(S^\star),$$ which concludes this case.

\bigskip

\noindent \textbf{Case B:} There exists an agent $i$ with  $\alpha^\star_{i} > 3/4 $ and $(1-\alpha_i^\star) \cdot f(S_i^\star) \geq 4 \cdot f(S_{-i}^\star)$ (and we are not in Case A).
In this case, since $\alpha^\star_i > 3/4$, we have $f(S_i^\star) \geq 16 \cdot f(S_{-i}^\star)$. 
    By subadditivity of $f$, this implies that $f(S_i^\star) \geq 16 \cdot (f(S^\star) - f(S_i^\star))$, or equivalently, 
    \begin{equation}
    f(S_i^\star) \geq \frac{16}{17} \cdot f(S^\star).\label{eq:1617}
    \end{equation}  
    
    Consider contract $\contract$ with $\alpha_i= \frac{1 + \alpha^\star_{i}}{2}$ and $\alpha_{j} = 0$ for $j\neq i$. Note that $\alpha_{i} > \alpha^\star_i$. 
    Let $\dist $ be any CCE of $\contract$. 
    We next show that $\E_{S\sim \dist}[f(S)] \geq \frac{1}{4} \cdot f(S_i^\star)$.
   
    We have the following: 
    \begin{eqnarray}        
     \E_{S\sim \dist}[\alpha_if(S_i) - c(S_i)] &\geq& \E_{S\sim \dist}[\alpha_i (f(S_i \cup S_{-i}) - f(S_{-i})) - c(S_i)]  \nonumber \\ & \geq & 
    \E_{S\sim \dist}[\alpha_i (f(S_i^\star\cup S_{-i}) - f(S_{-i})) ]- c(S_i^\star) \nonumber \\ & \geq &   \alpha_i f(S_i^\star) -\alpha_i\E_{S\sim \dist}[ f(S_{-i}) ]- c(S_i^\star)   \nonumber \\ & \geq &  \alpha_i f(S_i^\star) - \alpha_if(A_0)- c(S_i^\star),
    \label{eq:one-new}
    \end{eqnarray}
    where the first inequality is by subadditivity, the second inequality is since $\dist$ is a CCE of $\contract$, the third inequality is by monotonicity of $f$, and the last inequality is since no agent but $i$ will take non-zero cost actions in a CCE since $\alpha_j=0$ for $j\neq i$.
    On the other hand, since $S^\star$ is a PNE of $\contract^\star$, it must hold that 
    \[
    \alpha_i^\star f(S_i^\star \cup S_{-i}^\star) - c(S_i^\star) \geq \E_{S\sim \dist}[\alpha_i^\star f(S_i \cup S_{-i}^\star) - c(S_i)].
    \]
    By subadditivity of $f$, we have $f(S^\star_i \cup S^\star_{-i}) - f(S^\star_{-i}) \leq f(S^\star_i)$. Moreover, by monotonicity of $f$, we have  $f(S_i \cup S^\star_{-i}) \geq f(S_i)$. We thus obtain
    \begin{align}
    \alpha_i^\star f(S_i^\star) - c(S_i^\star) \geq \E_{S\sim \dist }[\alpha_i^\star f(S_i)  - c(S_i)] - \alpha_i^\star f(S_{-i}^\star).
    \label{eq:two-new}
    \end{align}
    By summing up~\eqref{eq:one-new} and~\eqref{eq:two-new}, we obtain
    \begin{align*}
    &\E_{S\sim \dist}[\alpha_if(S_i) - c(S_i)] + \alpha_i^\star f(S_i^\star) - c(S_i^\star) \\
    &\qquad\geq \E_{S\sim \dist }[\alpha_i^\star f(S_i)  - c(S_i)] - \alpha_i^\star f(S_{-i}^\star) +\alpha_i f(S_i^\star) - \alpha_if(A_0)- c(S_i^\star).
    \end{align*}
    Using linearity of expectation and that $\alpha_i \leq 1$ and $\alpha^\star_i \leq 1$, we get 
    \begin{equation}
    (\alpha_i - \alpha_i^\star) \E_{S\sim \dist}[f(S_i)] \geq (\alpha_i - \alpha_i^\star) \cdot  f(S_i^\star) - \alpha_i f(A_0) -\alpha_i^\star f(S_{-i}^\star) \geq (\alpha_i - \alpha_i^\star) \cdot  f(S_i^\star) -  f(A_0) - f(S_{-i}^\star) . \label{eq:alphaalpha-new}
    \end{equation}
    Thus, using that $\alpha_i > \alpha_i^\star$, 
    \begin{eqnarray}
    \E_{S\sim\dist}[f(S_i)]  & \geq & f(S_i^\star) - \frac{f(S_{-i}^\star) +f(A_0)}{\alpha_i - \alpha_i^\star}  = f(S_i^\star) - \frac{2(f(S_{-i}^\star) +f(A_0))}{1 - \alpha_i^\star}  \nonumber 
    \\ & \geq & f(S_i^\star) - \frac{2}{1 - \alpha_i^\star}\cdot  \left(\frac{1-\alpha_i^*}{4} \cdot f(S_{i}^\star)+ \frac{2}{17}(1-\sum_j \alpha_j^\star) f(S^\star) \right) \nonumber \\ &\geq& 
    f(S_i^\star) - \frac{2}{1 - \alpha_i^\star}\cdot  \left(\frac{1-\alpha_i^*}{4} \cdot f(S_{i}^\star)+ \frac{2}{17}(1- \alpha_i^\star) \frac{17}{16}f(S_i^\star) \right) = \frac{1}{4} f(S_{i}^\star)  \label{eq:s'1} .
    \end{eqnarray}
where the first inequality is by rearranging Inequality~\eqref{eq:alphaalpha-new}, the first equality is by the definition of $\alpha_i$, the second inequality is by the assumption of the case, and the last inequality is by Inequality~\eqref{eq:1617}.

 Overall, the principal's utility under contract $\contract$ and CCE $\dist$ is
    \begin{align*}
    \bigg(1 - \sum_{j} \alpha_{j}\bigg) \cdot \E_{S\sim \dist}[f(S)] & \geq (1-\alpha_i) \cdot \E_{S\sim \dist}[f(S_i)] = \frac{1}{2}(1-\alpha_i^\star) \cdot \E_{S\sim \dist}[f(S_i)]\geq 
    \frac{1}{8}(1-\alpha_i^\star) \cdot f(S_i^\star) \\
    &\geq \frac{1}{8} \bigg(1 - \sum_{j} \alpha^\star_{j}\bigg) \cdot f(S^\star_i) \geq \frac{2}{17} \bigg(1 - \sum_{j} \alpha^\star_{j}\bigg) \cdot f(S^\star),
    \end{align*}
    where the first equality is by definition of $\contract$ and monotonicity, the equality follows by the definition of $\alpha_i$, the second inequality holds by Inequality~\eqref{eq:s'1}, and the final inequality is by  Inequality~\eqref{eq:1617}
This concludes the argument for this case.

    \bigskip

    \noindent \textbf{Case C:} There exists an agent $i$ with $\alpha_{i} > 3/4 $ and $(1-\alpha_i^\star) \cdot f(S_i^\star) \leq 4 \cdot f(S_{-i}^\star)$ (and we are not in Case A).
    In this case, let $\contract$ be the contract where $\alpha_i=0$ and $\alpha_j=2\alpha_j^\star$ for $j\neq i$.
    By applying Lemma~\ref{lem:doubling-cce} on $S^\star$, $\contract^\star$, $N' = [n]\setminus \{i\}$ and $\gamma=2$ 
    we get that any CCE $\dist$ with respect to contract $\contract$ satisfies \begin{equation}
        \E_{S\sim\dist}[f(S)] \geq \frac{1}{4} f(\bigcup_{j \in N'} S_j^\star) . \label{eq:s'-new}
        \end{equation}
     We can bound the principal's utility under $\contract^\star$, and $S^\star$ by 
     \begin{eqnarray}
     \left(1-\sum_j \alpha_j^\star\right)f( S^\star) & \leq & \left(1-\sum_j \alpha_j^\star\right)f( S^\star_i) + \left(1-\sum_j \alpha_j^\star\right)f( S^\star_{-i})   
     \nonumber \\ & \leq & \left(1- \alpha_i^\star\right)f( S^\star_i) + f( S^\star_{-i})   \leq 5 \cdot f( S^\star_{-i}), \label{eq:s-i-new}
     \end{eqnarray}
     where the first inequality is by subadditivity, the last inequality is by the assumption of the case.

     One the other hand, under contract $\contract$ (for which $\sum_j \alpha_j = \sum_{j\neq i} 2\alpha^\star_j \leq  2(1- \alpha^\star_i) \leq \frac{1}{2}$), and CCE $\dist$, the principal's utility is  $$(1-\sum_j \alpha_j)\E_{S\sim \dist} [f(S)] \stackrel{\eqref{eq:s'-new}}{\geq }  (1- \sum_{j\neq i } 2\alpha^\star_j) \cdot   \frac{1}{4} f(\bigcup_{j \in N'} S_j^\star) \geq \frac{1}{8}   f( S_{-i}^\star) \stackrel{\eqref{eq:s-i-new}}{\geq} \frac{1}{40}   \left(1-\sum_j \alpha_j^\star\right) f( S^\star) ,$$
     which concludes the proof of the case.

    \bigskip

\noindent \textbf{Case D:} $\alpha_i^\star \leq 3/4$ for every $i$ (and we are not in case A).
Note that we can partition the agents into two bundles $B_1,B_2$, where for each $\ell \in\{1,2\}$, $\sum_{i\in B_\ell} \alpha_i^\star \leq 3/4$ (by the same argument as in the proof of Theorem~\ref{thm:constant-gap}). 
Now, assume without loss of generality that \begin{equation}
f(\bigcup_{i\in B_1} S_i^\star) \geq f(\bigcup_{i\in B_2} S_i^\star). \label{eq:assum-new}
\end{equation}
Let $\contract$ be the contract where $\alpha_i=0$ for $i\in B_2$ and $\alpha_i=\frac{7}{6}\alpha_i^\star$ for $i\in B_1$.
    By applying Lemma~\ref{lem:doubling-cce} on $S^\star$, $\contract^\star$, $N' = B_1$ and $\gamma=7/6$ we get that there any CCE  $\dist$ with respect to contract $\contract$ satisfies \begin{equation}
        \E_{S\sim\dist }[f(S)] \geq \frac{1}{14} f(\bigcup_{i \in B_1} S_j^\star) \label{eq:s'2-new}
        \end{equation}

The principal's utility from $\contract$ and $\dist$ is 
\begin{eqnarray*}
     (1-\sum_i \alpha_i) \E_{S\sim \dist}[f(S)]  & \geq & (1-\frac{7}{6} \sum_{i\in B_1} \alpha_i^\star) \cdot \frac{1}{14}  \cdot f(\bigcup_{i \in B_1} S_i^\star)  \geq (1-\frac{7}{6} \cdot \frac{3}{4})  \cdot\frac{1}{14} \cdot  f(\bigcup_{i \in B_1} S_i^\star) \\ &  \geq  & \frac{1}{112} \frac{f(\bigcup_{i \in B_1} S_i^\star)+f(\bigcup_{i \in B_2} S_i^\star)}{2}  \geq \frac{f(S^\star)}{224},
     \end{eqnarray*}
where the first inequality is by the definition of $\contract$ and by Inequality~\eqref{eq:s'2-new}, the second inequality is since $\sum_{i\in B_1} \alpha^\star_i\leq 3/4$, the third inequality is by Inequality~\eqref{eq:assum-new}, and the last inequality is by subadditivity.
This concludes the proof of the theorem.
\end{proof}


\section{Proof of Proposition~\ref{prop:subadditive-upper-bound}}~\label{app:proof-subadditive}%
We first prove a weaker variant of the Scaling-for-Existence Lemma for subadditive functions.

\begin{lemma}[Scaling-for-Existence Lemma for Subadditive]\label{lem:double-sub}
Suppose $f$ is subadditive. Let $\dist$ be a dropout-stable distribution with respect to $\contract$. For any set of $i \in N$ and $\gamma > 1$, let $\contract'$ be defined by $\alpha'_i = \gamma \cdot \alpha_i$,  and $\alpha'_{i'} = 0$ for $i'\neq i$.
Then, there exists a pure Nash equilibrium $S'$ with respect to $\contract'$, satisfying  $$f(S') \geq (1- \frac{1}{\gamma}) \cdot \E_{S\sim \dist}[f( S_i)].$$
\end{lemma}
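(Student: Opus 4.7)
The plan is to run the proof of Lemma~\ref{lem:double} with the singleton choice $N' = \{i\}$, and observe that the only place the XOS assumption was used collapses to ordinary subadditivity in this case. Specifically, the chain~\eqref{eq:ineq1} in the proof of Lemma~\ref{lem:double} reduces, for $N' = \{i\}$, to
\[
f(S) - f(S_{-i}) = f(S_i \cup S_{-i}) - f(S_{-i}) \leq f(S_i),
\]
which is a direct instance of the subadditivity of $f$. Everything else in the scaling argument then goes through under this weaker assumption on $f$.

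Concretely, I would reuse the weighted potential $\Phi(S,\contract) = f(S) - \sum_j c(S_j)/\alpha_j$ from Equation~\eqref{eq:potential}. First, combining the displayed subadditivity bound with dropout-stability at agent $i$---which gives $\alpha_i\, \E_{S \sim \dist}[f(S) - f(S_{-i})] \geq \E_{S \sim \dist}[c(S_i)]$---yields $\alpha_i\, \E_{S\sim\dist}[f(S_i)] \geq \E_{S\sim\dist}[c(S_i)]$, i.e., $\E_{S \sim \dist}[\Phi(S_i,\contract)] \geq 0$. (The boundary case $\alpha_i = 0$ is handled by the convention $0/0=0$ and the fact that dropout-stability then forces $\E[c(S_i)] = 0$.) Next, take $S'$ to be a global maximizer of $\Phi(\cdot,\contract')$. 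Since $\alpha'_{i'} = 0$ for every $i' \neq i$, any set with finite potential has $c(S'_{i'}) = 0$ for $i' \neq i$, so $\Phi(S',\contract') = f(S') - c(S'_i)/(\gamma \alpha_i) \leq f(S')$. Plugging the distribution $\dist$ (restricted to agent $i$'s action set) into $\Phi(\cdot,\contract')$ and using the non-negativity above,
\begin{align*}
f(S') \;\geq\; \Phi(S',\contract')
&\;\geq\; \E_{S \sim \dist}[\Phi(S_i,\contract')] \\
&\;=\; \left(1-\tfrac{1}{\gamma}\right)\E_{S\sim\dist}[f(S_i)] + \tfrac{1}{\gamma}\,\E_{S\sim\dist}[\Phi(S_i,\contract)]
\;\geq\; \left(1-\tfrac{1}{\gamma}\right)\E_{S\sim\dist}[f(S_i)],
\end{align*}
which is the claimed bound. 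Finally, since $\Phi(\cdot,\contract')$ is a weighted potential for the game induced by $\contract'$ (Proposition~\ref{prop:potential}), any global maximizer is also a local maximizer and hence a pure Nash equilibrium of $\contract'$.

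The main point---and what keeps this lemma strictly weaker than its XOS counterpart---is that the subadditivity inequality $f(S \cup T) \leq f(S) + f(T)$ only controls the marginal of one agent against everyone else and cannot be iterated over a non-trivial bundle of agents the way the XOS decomposition can. This is exactly why the scaled contract $\contract'$ must concentrate its entire budget on a single agent $i$, and, consistent with this asymmetry, why Proposition~\ref{prop:subadditive-binary} rules out a constant PNE-to-CCE gap in the subadditive regime.
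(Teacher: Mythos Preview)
Your proof is correct and follows essentially the same route as the paper's: use subadditivity to get $f(S_i)\ge f(S_i\mid S_{-i})$, combine with dropout-stability to conclude $\E[\Phi(S_i,\contract)]\ge 0$, then pick a global potential maximizer under the scaled contract and unwind the scaling. The only cosmetic difference is that the paper maximizes $\Phi$ over $2^{A_i}$ (taking $S'=S'_i$ with all other agents idle), whereas you maximize over all of $2^A$ and observe that agents $i'\neq i$ can only contribute zero-cost actions; both yield a valid PNE and the same bound.
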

\begin{proof}
    Fix a set $S$ in the support of $\dist$. Now for any set $S_i'\subseteq A_i$ define $\Phi(S_i', \contract) = f(S_i') - \frac{c(S_i')}{\alpha_i}$ where $\frac{0}{0}$ is interpreted as $0$, and $\frac{c}{0}$ for a positive $c$ is interpreted as $\infty$. 
    Then we have
    \begin{equation}
    \E_{S\sim \dist}[\Phi(S_i, \contract)]  = \E_{S \sim \dist}[f(S_i) -  \frac{c(S_i)}{\alpha_i}] \geq  \E_{S\sim \dist }[f(S_i \mid S_{-i}) -  \frac{c(S_i)}{\alpha_i}] \geq 0 , \label{eq:upp-potential}
    \end{equation}
    where the first inequality is by subadditivity, and the last inequality is by dropout-stability.

    Let $S'_i\subseteq A_i$ be a set of actions maximizing  $\Phi(S'_i, \contract')$ (when fixing  $\contract'$ as defined in the statement of the lemma). Then we have  
    \begin{align*}
    \Phi(S'_i, \contract') \geq \E_{S\sim \dist}[\Phi(S_{i},  \contract')] &= \E_{S\sim \dist }\left[f(S_{i}) -  \frac{c(S_i)}{\gamma \alpha_i}\right]\\
    &= \left(1-\frac{1}{\gamma}\right) \E_{S \sim \dist}[f(S_{i})] + \frac{1}{\gamma} \E_{S\sim \dist}[\Phi(S_{i},\contract)] \geq \left(1-\frac{1}{\gamma}\right) \E_{S\sim \dist}[f(S_{i})],
    \end{align*}
    where the first inequality follows by the maximality of $S'_i$, and the last inequality follows by Eq.~\eqref{eq:upp-potential}.

    As $S'_i$ is a global maximum of $\Phi(\cdot, \contract')$, it is also a local maximum. 
    Since $\Phi(\cdot,\contract')$ is a potential function for the game induced by the contract $\contract'$ (see Proposition~\ref{prop:potential}), this means that $S'_i$ is a pure Nash equilibrium with respect to contract $\contract'$. 
\end{proof}

We are now ready to prove Proposition~\ref{prop:subadditive-upper-bound}.
Consider any contract $\contract^\star$ and any coarse-correlated equilibrium $\dist$ with respect to $\contract^\star$. In the remainder of the proof, all expectations are over $S^\star$ that is distributed according to $ \dist$. 
We consider three cases:
\bigskip

\noindent \textbf{Case A:} There exists an agent $i$ with  $\alpha^\star_{i} > 3/4 $ and $(1-\alpha_i^\star) \cdot \E[f(S_i^\star)] \geq 4 \cdot \E[f(S_{-i}^\star)]$.
The proof of this case is identical to the proof of Case A of the proof of Theorem~\ref{thm:constant-gap}. 

    \bigskip

    \noindent \textbf{Case B:} There exists an agent $i$ with $\alpha^\star_{i} > 3/4 $ and $(1-\alpha_i^\star) \cdot \E[f(S_i^\star)] \leq 4 \cdot \E[f(S_{-i}^\star)]$.
    In this case, let $i^\star=\arg\max_{i'\neq i} \E[f(S_{i'})]$. Let $\contract'$ be the contract where $\alpha_{i'}'=0$ for $i'\neq i^\star$, and $\alpha_{i^\star}'=2\alpha_{i^\star}$.
    By applying Lemma~\ref{lem:double-sub} on $\dist$, $\contract^\star$, $i^\star$ and $\gamma=2$ we get that there exists an equilibrium $S'$ with respect to contract $\contract'$ such that \begin{equation}
        f(S') \geq \frac{1}{2} \Ex{f( S_{i^\star}^\star)} . \label{eq:s'-new3}
        \end{equation}
     We can bound the expected principal's utility under $\contract^\star$, and $\dist$ by 
     \begin{eqnarray}
     \left(1- \sum_j \alpha_j^\star\right)\Ex{f( S^\star)} & \leq & \left(1-\sum_j \alpha_j^\star\right)\Ex{f( S^\star_i)} + \left(1-\sum_j \alpha_j^\star\right)\Ex{f( S^\star_{-i})}   
     \nonumber \\ & \leq & \left(1- \alpha_i^\star\right)\Ex{f( S^\star_i)} + \Ex{f( S^\star_{-i})}   \leq 5 \cdot \Ex{f( S^\star_{-i})}, \label{eq:s-i-new3}
     \end{eqnarray}
     where the first inequality is by subadditivity, the last inequality is by the assumption of the case.

     One the other hand, under contract $\contract'$ (for which $\sum_j \alpha'_j =  2\alpha_{i^\star}^\star \leq  2(1- \alpha_{i}^\star) \leq \frac{1}{2}$), and equilibrium $S'$, the principal's utility is  
     $$(1-\sum_j \alpha_j') f(S') \stackrel{\eqref{eq:s'-new3}}{\geq }  (1-  2\alpha_{i^\star}^\star) \cdot   \frac{1}{2} \Ex{f( S_{i^\star}^\star)} \geq \frac{1}{4n}   \Ex{f( S_{-i}^\star)} \stackrel{\eqref{eq:s-i-new3}}{\geq} \frac{1}{20n}   \left(1- \sum_j \alpha_j^\star\right)\Ex{f( S^\star)} ,$$
     where the second inequality is by subadditivity and the definition of $i^\star$.
     This concludes the proof of the case.

    \bigskip

\noindent \textbf{Case C:} $\alpha_i^\star \leq 3/4$ for every $i$.
    In this case, let $i^\star=\arg\max_{i} \E[f(S_{i})]$. Let $\contract'$ be the contract where $\alpha_{i}'=0$ for $i\neq i^\star$, and $\alpha_{i^\star}'=\frac{7}{6}\alpha_{i^\star}$.
    By applying Lemma~\ref{lem:double-sub} on $\dist$, $\contract^\star$, $i^\star$ and $\gamma=\frac{7}{6}$ we get that there exists an equilibrium $S'$ with respect to contract $\contract'$ such that \begin{equation}
        f(S') \geq \frac{1}{7} \Ex{f( S_{i^\star}^\star)} . \label{eq:s'-new2-subadd}
        \end{equation}

     Under contract $\contract'$ (for which $\sum_j \alpha'_j =  \frac{7}{6}\alpha_{i^\star}^\star \leq  \frac{7}{6} \cdot \frac{3}{4} = \frac{7}{8}$), and equilibrium $S'$, the principal's utility is  
     $$(1-\sum_j \alpha_j') f(S') \stackrel{\eqref{eq:s'-new2-subadd}}{\geq }  (1-  \frac{7}{8}) \cdot   \frac{1}{7} \Ex{f( S_{i^\star}^\star)} = \frac{1}{56}   \Ex{f( S_{i}^\star)} \geq \frac{1}{56n}   \Ex{f( S^\star)} ,$$
     where the last inequality is by the definition of $i^\star$ and by subadditivity.
This concludes the proof of the proposition. \qed

\end{document}